\documentclass[a4paper,12pt]{article}

\usepackage{jheppub} 
\usepackage{hyperref}
\usepackage[T1]{fontenc} 

\usepackage{float, array, xspace, amscd, amsthm, amsmath, amssymb}
\usepackage{fancyhdr}
\usepackage{longtable}
\usepackage{tikz-cd,tikz}
\usetikzlibrary{matrix}
\usepackage{comment}
\usepackage{slashed}
\usepackage{mathrsfs}
\usepackage{cancel}
\allowdisplaybreaks

\usepackage[normalem]{ulem}

\theoremstyle{definition}

\newtheorem{lemma}{Lemma}
\newtheorem{proposition}{Proposition}

\newcommand{\nn}{\nonumber}
\newcommand{\dd}{{\rm d}}
\newcommand{\w}{\wedge}
\newcommand{\End}{{\rm End}}

\newcommand{\tr}{{\rm tr}}


\newcommand{\ID}{\mathbb{D}}

\newcommand{\IM}{\mathbb{M}}

\newcommand{\1}{\mathbf{1}}
\newcommand{\2}{\mathbf{2}}

\newcommand{\4}{\mathbf{4}}

\newcommand{\7}{\mathbf{7}}

\newcommand{\mfg}{\mathfrak{g}}


\newcommand{\cB}{\mathcal{B}}

\newcommand{\cD}{\mathcal{D}}
\newcommand{\cE}{\mathcal{E}}

\newcommand{\cL}{\mathcal{L}}

\newcommand{\cN}{\mathcal{N}}

\newcommand{\cQ}{\mathcal{Q}}

\newcommand{\cZ}{\mathcal{Z}}

\newcommand{\newD}{{\check {\mathbb D}}}

\newcommand{\Tmap}
{\mathcal{T}}
\newcommand{\Smap}
{\mathcal{S}}


\newcommand{\be}{\begin{equation}}
	\newcommand{\ee}{\end{equation}}
\def\bea#1\eea{\begin{align}#1\end{align}}


\title{Quantum aspects of heterotic $G_2$ systems}

\author[a]{Xenia de la Ossa,}
\author[b,c]{Magdalena Larfors,}
\author[c]{Matthew~Magill,}
\author[d]{Eirik~E.~Svanes.}

\affiliation[a]{Mathematical Institute, Oxford University}
\affiliation[b]{Department of Physics and Astronomy, Uppsala University}
\affiliation[c]{Department of Mathematics, Uppsala University}
\affiliation[d]{Department of Mathematics and Physics, University of Stavanger}

\preprint{UUITP-33/24}

\emailAdd{delaossa@maths.ox.ac.uk, magdalena.larfors@physics.uu.se, matthew.magill@math.uu.se,
eirik.svanes@uis.no}

\abstract{ 
Compactifications of the heterotic string, to first order in the $\alpha’$ expansion, on manifolds  with integrable $G_2$ structure give rise to three-dimensional ${\cal N} = 1$ supergravity theories that admit Minkowski and AdS ground states. 
As shown in \cite{delaOssa:2019cci}, such vacua correspond to critical loci of a real superpotential
 $W$. We perform a  perturbative study around a supersymmetric vacuum of the theory, which confirms that the first order  variation of the superpotential, $\delta W$, reproduces the BPS conditions for the system, and furthermore shows that $\delta^2 W=0$ gives the equations for infinitesimal moduli. This allows us to identify a nilpotent differential, and a symplectic pairing, which we use to construct a bicomplex, or a double complex, for the heterotic $G_2$ system. Using this complex, we determine infinitesimal moduli and their obstructions in terms of related cohomology groups.  Finally, by interpreting $\delta^2 W$ as an action, we compute the one-loop partition function of the heterotic $G_2$ system and show it can be decomposed into a product of one-loop partition functions of Abelian and non-Abelian instanton gauge theories.}

\setcounter{tocdepth}{2}

\date{\today}

\begin{document}

\maketitle

\newpage

\section{Introduction}

There has been a lot of recent interest in both the physics and mathematics literature in studying coupled systems of gauge theory and gravity (geometry). The heterotic string provides a fertile ground for this endeavour. Heterotic strings come with a gauge bundle, whose connection should, for supersymmetric solutions,  satisfy an instanton condition that depends on the geometry.   
Moreover, the coupling of the gauge sector with the geometry through the heterotic Bianchi identity forces the geometry to be generically torsional - a further layer of complication which has made understanding mathematical properties of these solutions, such as the moduli problem, more challenging.

In this paper we study the moduli of heterotic string compactifications on manifolds endowed with a $G_2$ structure, leading to a system of equations called the heterotic $G_2$ system \cite{delaOssa:2017pqy}.
This system corresponds to minimally supersymmetric, three-dimensional solutions of the heterotic string theory to first order in the $\alpha'$ expansion, as explored in e.g.  \cite{Gunaydin:1995ku,Friedrich:2001yp,Gauntlett:2002sc,Gurrieri:2004dt,Micu:2004tz,Gran:2005wf,Gran:2007kh,Lukas:2010mf,Gray:2012md,delaOssa:2014lma,Gran:2016zxk}. 
An interesting aspect of these systems is that they admit an 
${\cal N}=1$ three  dimensional effective field theory on both anti de Sitter (AdS$_3$) and Minkowski ($\IM_3$) spacetimes. The curvature radius of the AdS$_3$ space is related to a (scalar) torsion class which vanishes only in the case of $\IM_3$ \cite{delaOssa:2019cci}. In contrast, ${\cal N}=1$ compactifications of the heterotic strings to four dimensions, which are known as Hull-Strominger systems \cite{Hull:1986kz,Strominger:1986uh}, do not admit AdS$_4$ solutions.

In the last decade, progress has been made towards a better understanding of heterotic systems.  For Hull--Strominger systems, the infinitesimal moduli were parameterised as the first cohomology of a double extension sequence \cite{Melnikov:2011ez, Anderson:2014xha, delaOssa:2014cia}. The moduli problem has been shown to be elliptic, and connected to moduli spaces of solutions of suitable Killing spinor equations on a (holomorphic) Courant algebroid \cite{Garcia-Fernandez:2015hja}. Finite deformations have been explored \cite{Ashmore:2018ybe}, alongside the moduli space geometry, moment maps, stability, flow equations, and further connections with generalised geometry, see e.g. \cite{Candelas:2018lib,  Garcia-Fernandez:2018ypt, Ashmore:2019rkx, Garcia-Fernandez:2020awc, McOrist:2021dnd,   Garcia-Fernandez:2024ypl, Kupka:2024rvl} and references therein.

For heterotic $G_2$ systems,  the infinitesimal moduli were shown to be parametrised by the first cohomology of an operator $\check{\cal D}$, which operates on a bundle $\cal Q$ which couples geometric and gauge degrees of freedom \cite{delaOssa:2016ivz, delaOssa:2017pqy, delaOssa:2017gjq}. For $\check{\cal D}$ to be nilpotent, and thus define a cohomology, it is important that its curvature satisfies the $G_2$ instanton condition. When this is the case, the related $\check{\cal D}$ complex is elliptic \cite{delaOssa:2016ivz, delaOssa:2017pqy}. An interesting consequence of this is that the $\cal Q$ bundle of the heterotic $G_2$ system satisfies the same instanton condition as does the heterotic gauge bundle, and, by dimensional oxidation, a similar result can be shown for the Hull--Strominger system \cite{delaOssa:2017gjq}.  This is an example of what has come to be known as ``instantons for instantons'' and has been explored in depth in \cite{Silva:2024fvl}. Further investigations of heterotic $G_2$ geometry and its moduli, and their connections to generalised geometry has since been explored, both from a mathematical and physical standpoint, see \cite{ Clarke:2016qtg, Fiset:2017auc, Clarke:2020erl, McOrist:2024ivz}. Explicit local and global solutions to the system have also been constructed, e.g. \cite{Fernandez:2008wla, delBarco:2020ddt, Klaput:2011mz, Lotay:2021eog, delaOssa:2021qlt, delaOssa:2021cgd,2023arXiv230607128F, Galdeano:2024fsc}. In spite of this progress, the mathematical structures governing these $G_2$ geometries are more mysterious than their six-dimensional $SU(3)$ counterparts, and tools for understanding them are sorely needed.

As history has shown, physics can provide useful insights into the understanding geometric structures. Examples of this include mirror symmetry \cite{Candelas:1990rm}, and the use of topological field theory to study geometric invariants, such as Chern--Simons theory and the Jones polynomial of knots \cite{Witten:1988hf}, or holomorphic Chern--Simons theory and Donaldson--Thomas invariants \cite{thomas1997gauge, donaldson1998gauge}. In the case of heterotic solutions, physics provides us with a useful functional, the superpotential, whose critical points are exactly the solutions of the heterotic BPS equations \cite{delaOssa:2019cci,delaOssa:2015maa}. In the six-dimensional case, the superpotential was essential in the derivation of the Maurer--Cartan equation of the $L_\infty$ algebra governing finite deformations \cite{Ashmore:2018ybe}. It is natural to ask if similar progress can be made in the $G_2$ case. However, the six-dimensional $SU(3)$ case has an advantage over the $G_2$ solutions: it preserves more supersymmetry, and as a result, the superpotential is required to be a  holomorphic functional of the fields.  This provides a useful parametrisation of the moduli which can be exploited in understanding higher order deformations. For the $G_2$ case, an analogous  parametrisation is not yet known. Yet, we may explore the deformations to quadratic order, and we will do this in the present paper.

Following this perturbative strategy, we will, in this paper, study classical and quantum aspects of  the three-dimensional ${\cal N} = 1$ supergravity with superpotential $W$. We will work perturbatively around a supersymmetric vacuum of the theory. To first order, the variation of the superpotential reproduces the BPS conditions for the system, and, as we will see, the second order variations give the equations for moduli. Moreover, the analysis allows us to construct a bicomplex, or a double complex, which also encodes the gauge symmetry of the theory. Using this double complex, we can determine obstructions to infinitesimal deformations, and make remarks about finite deformations.

Finally, by interpreting the second order variation of the superpotential as an action, we can explore quantum aspects of the system. 
This direct computation of the quadratic action has two advantages, compared to the recent generalised geometry analysis of Refs. \cite{Kupka:2024rvl}. Firstly, it becomes clear that the heterotic $G_2$ moduli problem is governed by a double complex, as was suggested in \cite{Kupka:2024rvl}. Secondly, we will see that upon a field redefinition, the total complex can be diagonalised into three separate complexes. This makes it easier to compute and express the one-loop partition function in terms of determinants of Laplacians on more familiar bundles, as has been done in \cite{Ashmore:2023vji} in the case of type II compactifications on $SU(3)$ structure manifolds. We will compute the absolute value of the one-loop partition function here, and leave the study of its phase and $\eta$-invariant, and the extent to which it defines topological invariants, to future work.

The paper is organised as follows. In section \ref{sec:dW2} we introduce the heterotic $G_2$ superpotential, compute its first and second order variations and study its relevance for the moduli problem. We then explore the deformation theory to quadratic order, and define a double complex, or bicomplex, that encodes the action and symmetries of this theory, in section \ref{sec:complex}. With this in hand, we discuss infinitesimal moduli and their obstructions. In section \ref{sec:1loopall} we compute the one-loop partition function of the quadratic theory. Section \ref{sec:concl} contains our conclusions and a discussion of open problems. Two proofs of propositions are provided in  appendix \ref{proof-prop:del2W} and \ref{sec:elliptic}.

 \section{Variations of the superpotential and equations for moduli}
 \label{sec:dW2}

Our aim in this paper is to study
the three-dimensional ${\cal N} = 1$ supergravity with superpotential $W$, which is  associated to the heterotic $G_2$ system. We will work perturbatively around a supersymmetric vacuum of the theory. To first order, the variation of the superpotential reproduces the BPS conditions for the system, and the second order variations give the equations for moduli, which we compute in the latter half of this section. To start, we recapitulate how the superpotential is defined.

\subsection{The superpotential  and the heterotic $G_{2}$ system}

Let $Y$ be a seven dimensional manifold with a $G_{2}$ structure determined by a well defined nowhere vanishing three form, $\varphi$. 
In \cite{delaOssa:2019cci}, the authors obtained the necessary and sufficient constraints for $\cN = 1$ preserving heterotic string compactifications on $Y$ as the critical points of an effective three dimensional superpotential $W$.  This superpotential was obtained directly from dimensional reduction of the ten dimensional heterotic superstring action, keeping terms up to first order in $\alpha'$.
In this paper we use this superpotential to obtain an action whose equations of motion are equivalent to those describing the infinitesimal deformations of the $G_2$ heterotic system.  In this subsection we summarise briefly the results of \cite{delaOssa:2019cci} as they are the starting point of the research presented in this paper.

Consider, then, the superpotential  \cite{delaOssa:2019cci}
\be
W = \frac{1}{2}\, \int_Y e^{-2\phi}\ \left((H + h\, \varphi)\wedge\psi 
+ \frac{1}{2}\, \dd\varphi\wedge\varphi\right)~,\label{eq:spotential}
\ee
where 
$H$ is the flux on the internal 7-manifold $Y$,  $h$ is a constant which determines the AdS$_3$ curvature scale (which is also the three dimensional flux), $\phi$ is the dilaton, and $\psi = *_{\varphi}\varphi$. 
In this work we consider the space $\cal M$ of all quadruples
\be
[(Y,\varphi), (V,A), (TY, \Theta), H]~,\label{eq:quadruple}
\ee
where $(V,A)$ is a vector bundle over $Y$ with connection $A$, $(TY, \Theta)$ is the tangent bundle of $Y$ with connection $\Theta$ and $H$
is the three form flux on $Y$ given by
\be
H = \dd B + \frac{\alpha'}{4} (CS(A) - CS(\Theta))~, \label{eq:flux}
\ee
where $CS(A)$ and $CS(\Theta)$ represent the Chern-Simons forms for the connections $A$ and $\Theta$ respectively. 

The $G_{2}$ structure on $Y$ determined by $\varphi$ has intrinsic torsion classes $\tau_{i}$ determined uniquely by $\varphi$.  Indeed, they are given by the 
unique decomposition of $\dd\varphi$ and $\dd\psi$ into $G_{2}$ irreducible representations
\begin{align}
\dd\varphi &= - i_{T}(\varphi) = \tau_{0}\,\psi + 3 \tau_{1}\wedge\varphi + *\tau_{3}~,\label{eq:dphi}
\\
\dd\psi &= - i_{T}(\psi) = 4\, \tau_{1}\wedge\psi + *\tau_{2}~,\label{eq:dpsi}
\end{align}
where one can write the intrinsic torsion $T$ as a two form valued in $TY$ as 
\be
T^{a} = \widehat T^{a} + \frac{1}{6}\, \tau_{2}^{ab}\, \varphi_{b}~,\label{eq:intT}
\ee
and $\widehat T$ represents the totally antisymmetric part of the torsion
\be
\widehat T = - \frac{1}{6}\, \tau_{0}\, \varphi + \tau_{1}\lrcorner\psi + \tau_{3}~.\label{eq:Tanti}
\ee

A compactification of the heterotic string preserves at least one supersymmetry if the fields in the quadruple \eqref{eq:quadruple} satisfy the conditions derived from 
\[ W= 0~, \qquad \delta W = 0~.\]
As it was shown in \cite{delaOssa:2019cci}, these conditions give precisely the heterotic $G_2$ system of differential equations for the quadruple \eqref{eq:quadruple}.

To compute $\delta W$ we describe the infinitesimal deformations of the $G_{2}$ structure in terms of a vector valued one form $M$ as follows: 
\begin{equation}\label{eq:Mdecomp}
\delta\varphi = i_{M}(\varphi)~,\quad \delta\psi = i_{M}(\psi)
~,\quad M = \frac{1}{7}\, \tr M\, {\bf 1}+ m + \mathring{\Delta} \in \Omega^{1}(Y, TY)\, ,
\end{equation}
where $m\in\Omega^{2}_{\bf 7}(X)$ and $\mathring{\Delta}$ is traceless symmetric.  The variations of the three form flux $H$ are 
\be
\delta H = \dd{\cal B} + \frac{\alpha'}{2}\, \Big( \tr(\alpha\wedge F) - \tr(\kappa\wedge R) \Big)~,\label{eq:deltaH}
\ee
where $\alpha\in \Omega^{1}(Y, {\rm End}(V))$ and 
$\kappa\in \Omega^{1}(Y, {\rm End}(TY))$ and $\dd\cal B$ corresponds to the ($\alpha'$ corrected) variations of  $\dd B$.
The $\alpha'$-corrected variations $\cal B$ of the $B$-field are such that $\dd{\cal B}$ is a well defined three form on $Y$.

The first order variation of the superpotential is given by
\[
\delta W = \frac{1}{2}\,\int_{Y}  e^{-2\phi} \Big( -2\,\delta \phi\,{\mathfrak{w}} +\delta{{\mathfrak{w}}} \Big)~,
\]
where
\[
{\mathfrak{w}} =
 (H + h\, \varphi)\wedge\psi + \frac{1}{2}\, \dd\varphi\wedge\varphi~.
\]
In \cite{delaOssa:2019cci} it was found that
\begin{align}
\begin{split}
\delta{{\mathfrak{w}}} &=  
 i_{\mathring M}(\psi)\wedge (H - \widehat T)
 - ({\cal B} - 2m) \wedge \big( \dd\psi -2\,\dd\phi\wedge\psi\big)
\\[5pt]
& 
+\frac{\alpha'}{2}\,\Big( \tr(\alpha\wedge F) - \tr(\kappa\wedge R) \Big)\wedge \psi 
+ \left(\delta h + \frac{3}{7}\, \tr M\, \Big(h + \frac{1}{3}\,\tau_{0}\Big) \right) \,   \varphi\wedge\psi   ~.
\end{split}
\label{eq:delwbis}
\end{align}
where $\mathring M$ is the traceless part of $M$. { Thus, with reference to equation \eqref{eq:Mdecomp}, $\mathring M = \mathring{\Delta} + m$.  }

Noting that each term in the expression \eqref{eq:delwbis} must vanish separately, we obtain the equations of motion, that is, the heterotic $G_2$ system:
\begin{align}
&\dd(e^{-2\phi}\, \psi) = 0\qquad
 \iff\qquad
\tau_{2}= 0~,\qquad  2\, \tau_{1} = \dd\phi~,
\nn\\[5pt]
& H = \widehat T~,\nn
\\[3pt]
&F\wedge\psi = 0~, \qquad R(\Theta)\wedge\psi = 0 ~, \qquad \label{eq:HEOM}
\\[3pt]
&  h = - \frac{1}{3}\, \tau_{0}~, \qquad \qquad\delta h = 0 ~.\nn
\end{align}
The vanishing of $\tau_2$ is the necessary and sufficient  condition for there to exist a totally antisymmetric torsion (which is then unique) \cite{FriedrichIvanov2001}.  The second equation identifies the torsion with the three form flux $H$. The third line means that the connections $A$ and $\Theta$ are instantons.  Finally, the three dimensional flux $h$ is identified with the torsion class, $\tau_0$, and moreover their values are fixed.

 \bigskip
 
 \subsection{Second order variations of $W$ }  \label{ssec:deldelW} 

We have that
\[
\delta_{2}\delta_{1} W 
= \frac{1}{2}\,\int_{Y}  e^{-2 \phi}   
\big( 2\, (2\,\delta_{1}\phi\,\delta_{2}\phi - \delta_{2}\delta_{1}\phi)\, {{\mathfrak{w}}}
-2\, (\delta_{2}\phi\, \delta_{1}{{\mathfrak{w}}} + \delta_{1}\phi\, \delta_{2}{{\mathfrak{w}}})
+ \delta_{2}\delta_{1}{{\mathfrak{w}}}\big)~.
\]
We will show in this section that
\be
\begin{split}
\frac{1}{2}\left(\delta_{2}\delta_{1} W+ \delta_{1}\delta_{2} W\right) \big|_0 
&=  \frac{1}{4}\, \int_{Y} e^{-2 \phi} \, \left(\delta_{2}\delta_{1}{{\mathfrak{w}}} + \delta_{1}\delta_{2}{{\mathfrak{w}}}\right)\,|_{0} 
\\[8pt]
&= \langle {\cal Z}_{1}, \ID\,{\cal  Z}_{2} \rangle_{\cal E}~,
\end{split}
\label{eq:deldelWonshell}
\ee
where the symbol $|_{0} $ means that after varying we enforce the equations of motion, and 
\[
 {\cal Z}\in {\cal E}^{0} = \Omega^{0}(Y) \oplus \Omega^{1}(Y, {\cal Q}) \oplus \Omega^{2}_{\bf 7}(Y)~.
 \] 
  
  As we will show in section \ref{sec:complex}, this defines a shifted symplectic pairing $\langle\cdot, \cdot \rangle_{\cal E} $, and a nilpotent differential operator $\ID$, which is  self-adjoint with respect to $\langle\cdot, \cdot \rangle_{\cal E} $. {The differential operator is defined such that the equations for infinitesimal moduli of the heterotic system correspond to}
 \[
 \check{\mathbb D}{\cal Z}= 0~,
 \]
 {where $ \check{\mathbb D}$ denotes projection to $G_2$ irreducible representations that will be defined below}. These definitions will allow us, in the ensuing sections, to define a bicomplex that captures the gauge symmetries of the 3-dimensional theory, and quantize the theory. This furthermore requires the present definition of ${\cal Z}$, which is an extension of the infinitesimal deformations defined in \cite{delaOssa:2017pqy}. In that reference, the infinitesimal moduli of the heterotic $G_2$ system were identified with $H^1(Y, {\cal Q})$. As will become evident in section \ref{sec:complex}, this does not suffice to provide a complete description of the gauge degrees of freedom, which we need in order to quantize the theory.
 
 For the second order variation of $W$, we start from  equation \eqref{eq:delwbis} and compute 
 $\delta_{2}\delta_{1}{{\mathfrak{w}}}\,|_{0}$. {For ease of exposition, we relegate most of this analysis to Appendix \ref{proof-prop:del2W}. We also refer the reader to section 2.2.1 of \cite{delaOssa:2017pqy} for definitions of the insertion operator $i_\alpha$, and further explanation of our notation.} The result of the computation is summarised in the following proposition 

\begin{proposition}\label{prop:del2W}
Let
\be
\beta = \frac{1}{2} (\pi_{\bf 7}({\cal B}) - 2 m) ~, \label{eq:beta}
\ee
 \be
y =  M   + \frac{1}{2}\, \pi_{\bf{14}}({\cal B})~, \label{eq:y}
\ee
and
\be
\hat\delta\phi = \delta\phi - \frac{2}{7} \tr M~.
\ee
Then 
 \be
\begin{split}
\frac{1}{2}\, \delta_{2}\delta_{1}{{\mathfrak{w}}}\,|_{0} 
& =   \Big\{ ~~i_{y_{1}}
\Big[
 \dd_{\zeta} y_{2} 
-  \frac{\alpha'}{4}\, \Big( \tr (\alpha_{2} \wedge F) - \tr(\kappa_{2}\wedge R)\Big) 
\Big]
- i_{\mathring M_{1}}(\dd\beta_{2})
\\[5pt]
 &\quad~~ +  
\frac{\alpha'}{4}\,\Big[ \tr\big(\alpha_{1}\wedge (\dd_{A}\alpha_{2} - i_{ M_{2}}(F)\big) 
- \tr\big(\kappa_{1}\wedge (\dd_{\Theta}\kappa_{2}-  i_{ M_{2}}(R) )\big)
 \Big]
 \\[5pt]
&\quad~~ 
- \beta_{1}\wedge\left(
\frac{1}{3}\, \psi\lrcorner\,\left(i_{\dd_{\zeta}{\mathring M}_{2}}(\psi)\right) +  i_{{\mathring M}_{2}} (4\, \tau_{1}) -  2\,\hat\delta_{2}(\dd\phi) \right)
\Big\}\wedge\psi
~, \label{eq:deldelw}
 \end{split}
 \ee
 where  $\zeta$ is the Hull connection given by
 \[
 \dd_{\zeta} V^{a} = \dd V^{a} + \Gamma^{a}{}_{bc}\dd x^{b}\wedge V^{c}~, \quad\forall \,V\in\Omega^{0}(Y, TY)~,
 \]
 and $\Gamma^{a}{}_{bc}$ are related to the symbols of a $G_{2}$ compatible connection by
 \[
 \nabla_{b} V^{a} = \partial_{b} V^{a} + \Gamma^{a}{}_{cb}\,V^{c}~, \quad\forall\, V\in\Omega^{0}(Y, TY)~.
 \]
\end{proposition}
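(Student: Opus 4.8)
The plan is to obtain \eqref{eq:deldelw} by taking a second variation $\delta_2$ of the first-variation formula \eqref{eq:delwbis} and then evaluating on-shell. Reading \eqref{eq:delwbis} as an identity for $\delta_1\mathfrak{w}$ expressed through direction-$1$ data and the background, I would differentiate once more along direction $2$ (so only background fields are varied) and impose the equations of motion \eqref{eq:HEOM} at the end. The point is that each of the four terms in \eqref{eq:delwbis} is (a first-order datum)$\,\wedge\,$(a background quantity), and the background quantities $H-\widehat T$, $\dd\psi-2\dd\phi\wedge\psi$, $F$, $R$ and $h+\tfrac13\tau_0$ all vanish on the critical locus. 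Hence, after applying $\delta_2$ and setting $|_0$, only the terms in which $\delta_2$ lands on the vanishing factor survive; everything else is killed. This reduces the problem to a short list of building blocks: $\delta_2(H-\widehat T)$, $\delta_2(\dd\psi-2\dd\phi\wedge\psi)$, $\delta_2 F$, $\delta_2 R$, together with the occurrences of $\delta_2\psi=i_{M_2}(\psi)$ sitting inside the Chern--Simons terms whose scalar factors only vanish after wedging with $\psi$.

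\textbf{Main steps.} For the building blocks I would use \eqref{eq:flux}, \eqref{eq:deltaH}, the torsion decomposition \eqref{eq:dphi}--\eqref{eq:Tanti}, and the parametrisation \eqref{eq:Mdecomp} with $\delta\varphi=i_M(\varphi)$, $\delta\psi=i_M(\psi)$. Concretely: (i) $\delta_2 H=\dd\mathcal{B}_2+\tfrac{\alpha'}{2}\bigl(\tr(\alpha_2\wedge F)-\tr(\kappa_2\wedge R)\bigr)$, while $\delta_2\widehat T$ is computed from the variations of $\tau_0,\tau_1,\tau_3$ obtained by varying \eqref{eq:dphi}--\eqref{eq:dpsi} and projecting onto $G_2$ irreducibles; after collecting the $\mathbf{7}$- and $\mathbf{14}$-parts of $\mathcal{B}$ with $M$, the combination $\delta_2(H-\widehat T)$ reorganises into $\dd_\zeta y_2-\tfrac{\alpha'}{4}\bigl(\tr(\alpha_2\wedge F)-\tr(\kappa_2\wedge R)\bigr)$ plus the piece producing $\dd\beta_2$, with $\beta$ and $y$ as in \eqref{eq:beta}--\eqref{eq:y}. (ii) $\delta_2 F=\dd_A\alpha_2$ and $\delta_2 R=\dd_\Theta\kappa_2$, so the Chern--Simons variations give $\tr(\alpha_1\wedge\dd_A\alpha_2)\wedge\psi$ and $\tr(\kappa_1\wedge\dd_\Theta\kappa_2)\wedge\psi$, while the leftover $\tr(\alpha_1\wedge F)\wedge i_{M_2}(\psi)$ is rewritten, using that $i_{M_2}$ is an even derivation and the instanton conditions $F\wedge\psi=R\wedge\psi=0$, as $-\tr(\alpha_1\wedge i_{M_2}(F))\wedge\psi$, completing the combinations $\dd_A\alpha_2-i_{M_2}(F)$ and $\dd_\Theta\kappa_2-i_{M_2}(R)$. (iii) $\delta_2(\dd\psi-2\dd\phi\wedge\psi)=\dd\bigl(i_{M_2}(\psi)\bigr)-2\dd(\delta_2\phi)\wedge\psi-2\dd\phi\wedge i_{M_2}(\psi)$, which after using $\dd\psi=2\dd\phi\wedge\psi$ on-shell and the identities relating $\dd\bigl(i_M(\psi)\bigr)$ to $i_{\dd_\zeta\mathring M}(\psi)$, $i_{\mathring M}(4\tau_1)\wedge\psi$ and the trace terms, yields the last line of \eqref{eq:deldelw} with the shifted dilaton $\hat\delta\phi=\delta\phi-\tfrac27\tr M$; and (iv) the $\varphi\wedge\psi$ term contributes only $\tfrac17\tr M_1\,\delta_2\tau_0\,\varphi\wedge\psi$ (since $\delta_2 h=0$), which is absorbed into the same $\hat\delta\phi$ and $y$ redefinitions. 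Throughout I would freely integrate by parts against $e^{-2\phi}$ and use the $G_2$ contraction identity $i_{\mathring M}(\psi)\wedge\omega^{(3)}=-\,i_{\mathring M}(\omega^{(3)})\wedge\psi$ (valid for traceless $\mathring M$) to move the outer insertion operator onto the variations and land on the stated form.

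\textbf{The hard part.} The main obstacle is step (i) together with the $G_2$ representation-theoretic bookkeeping: correctly extracting $\delta\widehat T$ (equivalently $\delta\tau_0,\delta\tau_1,\delta\tau_3$) from the variations of \eqref{eq:dphi}--\eqref{eq:dpsi}, and then repeatedly applying contraction identities for $\varphi$ and $\psi$ together with Schur orthogonality of the $\mathbf{1},\mathbf{7},\mathbf{14},\mathbf{27}$ components to collapse a large number of a priori independent terms into the compact Hull-connection expression; the emergence of $\dd_\zeta$ (rather than $\dd_A$ or $\dd_\Theta$) is forced precisely by the torsionful derivative of $\psi$, and checking that all stray terms cancel or recombine is delicate. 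A secondary subtlety is tracking the total-derivative terms carefully, since under the $e^{-2\phi}$ weight they are exactly what reconstructs $\dd_\zeta y_2$ and $\dd\beta_2$ rather than being discarded. Because of its length this computation is carried out in Appendix \ref{proof-prop:del2W}.
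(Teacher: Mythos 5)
Your proposal follows essentially the same route as the paper's Appendix~\ref{proof-prop:del2W}: vary \eqref{eq:delwbis} once more, observe that each term there multiplies a background factor that vanishes on shell so only the pieces where $\delta_2$ lands on those factors survive, compute the building blocks $\delta_2\widehat T$, $\delta_2(\dd\psi-2\,\dd\phi\wedge\psi)$, $\delta_2F$, $\delta_2R$ and $\delta_2\tau_0$, and recombine everything via the contraction identity $i_{M}(\psi)\wedge\lambda=\tr M\,\lambda\wedge\psi-i_{M}(\lambda)\wedge\psi$ into the variables $y$, $\beta$ and $\hat\delta\phi$. The one small caveat is that \eqref{eq:deldelw} is a pointwise identity of $7$-forms, so no integration by parts against $e^{-2\phi}$ is used (or permitted) at this stage; total derivatives are only traded away later, in the symmetrisation step of Proposition~\ref{prop:delW2sym}.
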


\bigskip


\proof See Appendix \ref{proof-prop:del2W}. \endproof

{We have only defined the Hull connection acting on vector valued $0$-forms. Since this is a metric connection, it is  straight forward to generalize the definition to other tensor valued $p$-forms \cite{delaOssa:2016ivz, delaOssa:2017pqy}.}


Looking back at equation \eqref{eq:deldelWonshell}, we now need to compute the {\it symmetric} second order variation of the superpotential. 
The result of symmetrising the second order variation in proposition \ref{prop:del2W} is summarised in the following proposition. 

\proposition
\label{prop:delW2sym}
 \be
\begin{split}
&\frac{1}{2}\big(\delta_{2}\delta_{1} W + \delta_{1}\delta_{2} W \big) \,|_{0} 
\\[5pt]
& =  \int_{Y} \Big\{ ~~i_{y_{1}}
\Big[
 \dd_{\zeta} y_{2} 
-  \frac{\alpha'}{4}\, \Big( \tr (\alpha_{2} \wedge F) - \tr(\kappa_{2}\wedge R)\Big) 
\Big]
- i_{y_{1}}(\hat\dd\beta_{2})
\\[5pt]
 &\qquad +  
\frac{\alpha'}{4}\,\Big[ \tr\big(\alpha_{1}\wedge (\dd_{A}\alpha_{2} - i_{ M_{2}}(F) )\big) 
- \tr\big(\kappa_{1}\wedge (\dd_{\Theta}\kappa_{2}-  i_{ M_{2}}(R) )\big)
 \Big]
 \\[5pt]
&\qquad
- \beta_{1}\wedge\left[\,
\frac{1}{3}\, \psi\lrcorner\,\left(i_{\dd_{\zeta}{\mathring M}_{2}}(\psi)\right) +  i_{{\mathring M}_{2}} (4\, \tau_{1}) -  \dd(\hat\delta_{2}\phi) \right]
- \hat\delta_{1}\phi\, \dd\beta_{2}
\Big\}\wedge\Psi
~, \label{eq:sym}
 \end{split}
 \ee
 where we recall that $\Psi = e^{-2\phi}\, \psi$ is a closed 4-form, and 
  \[
  \hat \dd \beta = \dd\beta - \pi_{\bf 1}(\dd\beta) = \dd\beta - \check\dd\beta~.
  \]

\proof
Most terms in \eqref{eq:deldelw} are symmetric except those which contain $\beta$.  Indeed, up to total derivatives and supersymmetry conditions:
\begin{align*}
i_{y_{2}} (\dd_{\zeta}y_{1})\wedge\Psi  
&= i_{y_{1}} (\dd_{\zeta}y_{2})\wedge\Psi ~,
\\ 
\tr \big( \alpha_{2}\wedge \dd_{A}\alpha_{1}\big)\wedge\Psi  
&= \tr \big( \alpha_{1}\wedge \dd_{A}\alpha_{2}\big)\wedge\Psi ~,
\\
 - \tr  (\alpha_{2}\wedge i_{y_{1}}(F))\wedge\Psi \, 
 &= - i_{y_{1}}(\tr (\alpha_{2}\wedge F)) \wedge\Psi ~, 
 \\[5pt]
   i_{\mathring M_{2}}\Big(\dd\beta_{1} \Big)\wedge\Psi  
& = \beta_{1}\wedge\left(
 \frac{1}{3}\, \psi\lrcorner\,\left(i_{\dd_{\zeta}\mathring M_{2}}(\psi)\right) +  i_{\mathring M_{2}} (4\, \tau_{1})  \right) \wedge\Psi ~,
\\[5pt]
\beta_{2}\wedge\dd(\hat\delta\phi_{1})\wedge\Psi 
&= - \,\hat\delta\phi_{1}\, \dd\beta_{2}\wedge\Psi~.
 \end{align*}
Here we only  remark on the fourth equation, as it is straightforward to prove the others. To prove this equation have
\[
i_{{\mathring M}_{2}}(\dd\beta_{1})\wedge\Psi 
= i_{{\mathring M}_{2}}(\dd\beta_{1}\wedge\Psi) 
- \dd\beta_{1}\wedge i_{{\mathring M}_{2}}(\Psi)~.
\]
The first term vanishes as $\mathring M$ is traceless.  Then
\[
\begin{split}
i_{{\mathring M}_{2}}(\dd\beta_{1})\wedge\Psi
&= - \dd\big(\beta_{1}\wedge i_{{\mathring M}_{2}}(\Psi)\big)
+ \beta_{1}\wedge \dd i_{{\mathring M}_{2}}(\Psi) 
\\[5pt]
&= \beta_{1}\wedge\big( i_{\dd_{\zeta}{\mathring M}_{2}}(\Psi) + i_{{\mathring M}_{2}}(4\tau_{1})\wedge\Psi
\big)~,
\end{split}
\]
from which the result follows.

Finally, to see how the the second term on the right hand side of equation \eqref{eq:sym} comes about we have
\[
i_{{\mathring M}_{1}}\big(\dd\beta_{2} \big)\wedge\Psi =
\left(i_{M_{1}}\big(\hat\dd\beta_{2} \big)
   + \frac{1}{2}\, i_{\pi_{\bf 14}({\cal B}_{1})}\big(\hat\dd\beta_{2} \big)
   \right)\wedge\Psi
= i_{y_{1}}\big(\hat\dd\beta_{2} \big)\wedge\Psi  
 ~,
  \] 
  where
  \[
  \hat \dd \beta = \dd\beta - \pi_{\bf 1}(\dd\beta) = \dd\beta - \check\dd\beta~.
  \]
  and we have used
  \[
  \hat\dd\beta\wedge\psi = 0~.
  \]

 $\null \hfill\square$ 
\bigskip

  We are now ready to read off from equation \eqref{eq:sym} the set of differential equations that the infinitesimal deformations must satisfy:
\begin{align}
\dd(\hat\delta\phi) - \frac{1}{3}\, \psi\lrcorner\,\big(i_{\dd_{\zeta}\mathring M}(\psi)\big) -  i_{\mathring M} (4\, \tau_{1}) 
  &= 0~. \label{eq:ddeldil}
\\*[5pt]
\pi_{\bf 7}\left[\dd_{\zeta}  y_{a}  
+ \frac{\alpha'}{4}\, \Big( \tr \big(\alpha\wedge F_{a}\big) 
 - \tr \big(\kappa\wedge R_{a}\big)\Big)
    - (\hat\dd\beta)_{a} \right] &= 0~,\label{eq:sysmoduli}
  \\*[5pt]
\pi_{\bf 7}\big(\dd_{A}\alpha - i_{M}(F)\big) &= 0~, \label{eq:Vmoduli}
\\*[5pt]
\pi_{\bf 7}\big(\dd_{\Theta}\kappa -  i_{M}(R)\big) &= 0~, \label{eq:Tmoduli}
 \\*[5pt]
\pi_{\bf 1}(\dd\beta) &= 0~.\label{eq:pi1dbeta}
\end{align}

These equations include and extend the equations for infinitesimal moduli presented by some of the present authors in \cite{delaOssa:2017pqy};  equation \eqref{eq:ddeldil} is equation (3.10) in \cite{delaOssa:2017pqy}, and the equations \eqref{eq:sysmoduli}-\eqref{eq:Tmoduli}
reduce, upon setting $\beta$ to zero, to equations (5.10)-(5.13) in that reference. To comment on these differences, we note that while the constraint $\beta=0$ was an allowed and simplifying assumption in the study of infinitesimal moduli of the heterotic $G_2$ system done in Ref.~\cite{delaOssa:2017pqy}, it is by no means necessary. As we will see below, for the analysis in this paper, $\beta$ and $\delta\hat{\phi}$  will play a prominent role.

\subsection{The pairing $\langle\cdot,\cdot\rangle_{\cal E}$ and the operator $\mathbb D$}
\label{sec:Dops}

 In this subsection, we will repackage the infinitesimal deformations in a manner that allows us to write equation \eqref{eq:sym}
  in the form of equation \eqref{eq:deldelWonshell}, i.e.~to identify the operator $\ID$ and the pairing on  $\cal E$.

 Let
 \be
{\cal Z} = 
  \begin{pmatrix}
    \hat\delta\phi \\
  Z \\
  \beta
  \end{pmatrix}
  ~~\in~~ {\cal E}^{0}~,\qquad 
  Z = \begin{pmatrix}
   y\\
  \alpha \\
  \kappa
  \end{pmatrix} ~~\in ~~\Omega^{1}(Y, {\cal Q})~,
\ee
where  $\beta \in \Omega^{2}_{\bf 7}(Y)$, $ \hat\delta\phi \in \Omega^{0}(Y)$, and
$\cal Q$ is the bundle defined by
\be
{\cal Q} = TY\oplus {\rm End}(V)\oplus {\rm End}(TY)~.
\ee
Let $\cal D$ be the differential operator acting on $\Omega^{p}(Y, {\cal Q})$
defined by \cite{delaOssa:2017pqy}
\be
{\cal D} =  \begin{pmatrix}
    ~ \dd_{\zeta}~&~{\cal F}^{\dagger}~&~{\cal R}^{\dagger}~\\
~{\cal F}~&~\dd_{A}~&~0~\\
~{\cal R}~&~0~&~\,\dd_{\Theta}~~
  \end{pmatrix}~,\label{eq:DonQ}
  \ee
  where
\begin{align}
\begin{split}
{\cal F}~:~ \quad\Omega^{p}(Y, TY) ~~&\longrightarrow~~ \Omega^{p+1}(Y, {\rm End}(V))
\\[5pt]
\qquad\quad y^{a}\quad ~~&\longmapsto\quad {\cal F}(M) = (-1)^{p}\, i_{y}(F)~,
\end{split}\label{eq:calF}
\\[10pt]
\begin{split}
{\cal F}^{\dagger}~:~ \Omega^{p}(Y, {\rm End}(V))~~&\longrightarrow~~ \Omega^{p+1}(Y, TY) 
\\[5pt]
\qquad\qquad \alpha \quad ~~&\longmapsto\quad 
({\cal F}^{\dagger}\alpha)^{a} = - \tr (\alpha\wedge F)^{a}~,
\end{split}\label{eq:calFadj}
\end{align}
with analogous definitions for $\cal R$ and ${\cal R}^{\dagger}$. 
Note that the same labelling  $\cal F$ and ${\cal F}^{\dagger}$ may be used, without ambiguity, for the action of these maps on different $p$-forms.

We also define the following new operations on ${\cal E}^{0}$: 
\begin{align}
\begin{split}
{\tau}~:~ \quad\Omega^{1}(Y, TY) ~~&\longrightarrow~~ \Omega^{1}(Y)
\\[5pt]
\qquad\quad y\quad ~~&\longmapsto\quad \tau(\mathring M) = - i_{\mathring M}(4\tau_{1}) 
- \frac{1}{3}\, \psi\, \lrcorner  i_{\dd_{\zeta}{\mathring M}}(\psi)~,
\end{split}\label{eq:T1}
\\[10pt]
\begin{split}
{\tau}^{\dagger}~:~ \Omega^{2}_{\bf 7}(Y)~~&\longrightarrow~~ \Omega^{2}(Y, TY) 
\\[5pt]
\qquad\qquad \beta \quad ~~&\longmapsto\quad (\tau^{\dagger}\beta)_{a} = - (\hat\dd\beta)_{a}~.
\end{split}\label{eq:T1adj}
\end{align}

With these definitions at hand, equation \eqref{eq:sym} can be written as 
     \begin{equation}
\begin{split}
\frac{1}{2}\big(\delta_{2}\delta_{1} W + \delta_{1}\delta_{2} W \big) \,|_{0} 
& =  \langle Z_{1}, {\cal D} Z_{2}\rangle_{\cal Q} 
+ \int_{Y }\,\Big\{ 
  -\, \hat\delta_{1}\phi\, \dd\beta_{2} 
  \\[8pt]
  & +  \left( i_{y_{1}}\big(\tau^{\dagger}(\beta_{2})\big) 
  + \beta_{1}\wedge \big[\tau(\mathring M_{2}) + \dd\hat\delta_{2}\phi   \big]\right)
  \Big\}\wedge\Psi~,
\end{split}\label{eq:sym2}
  \end{equation} where we recall that $\Psi = e^{-2\phi} \psi$,  and 
$\langle\cdot,\cdot\rangle_{\cal Q}$ is the pairing  on $\cal Q$ (defined  as in \cite{delaOssa:2017pqy})
\be
\langle\cdot,\cdot\rangle_{\cal Q} = \int g_{\cal Q}(\cdot, \cdot)~,
\quad
g_{\cal Q}(\cdot, \cdot)
= 
\begin{pmatrix}
~ i_{\cdot}(\cdot)& 0& ~0\\[4pt]
~ 0 & ~~ \frac{\alpha'}{4}\, {\bf {\tr}}_{V}(\cdot\wedge\cdot) & ~0\\[4pt]
~0 & ~ 0 & - \frac{\alpha'}{4}\, {\bf {\tr}}_{TY}(\cdot\wedge\cdot) ~\,
\end{pmatrix} \wedge\Psi~.\label{eq:innerQ}
\ee

 It is easy to check that  ${\cal F}^{\dagger}$ is indeed the adjoint of $\cal F$ with respect to $\langle\cdot,\cdot\rangle_{\cal Q}\,$.  This follows from the the symmetry of \eqref{eq:sym} discussed in the proof of  proposition \ref{prop:delW2sym}.   Similar considerations apply for the operators $\cal R$ and ${\cal R}^{\dagger}$.   Furthermore, these considerations imply that  $\cal D$ is a self adjoint operator with respect to $\langle\cdot,\cdot\rangle_{\cal Q}\,$, that is, 
 \[
 \langle Z_{1}, {\cal D} Z_{2}\rangle_{\cal Q} = \langle Z_{2}, {\cal D} Z_{1}\rangle_{\cal Q} \:.
 \]

As mentioned before, the  proof is in fact equivalent to the process of finding the symmetric second order variation of the superpotential. 

Finally, we can define a new pairing $\langle\cdot,\cdot\rangle_{\cal E}$, and rewrite equation \eqref{eq:sym2}  as
\begin{equation}
\frac{1}{2}\big(\delta_{2}\delta_{1} W + \delta_{1}\delta_{2} W \big) \,|_{0} 
 = \langle {\cal Z}_{1}, {\mathbb D} {\cal Z}_{2}\rangle_{\cal E}~,
\label{eq:Delta2W}
  \end{equation}
where ${\cal E}^{0} = \Omega^{0}(Y) \oplus \Omega^{1}(Y, {\cal Q}) \oplus \Omega^{2}_{\bf 7}(Y)$,
 and $\mathbb D$ is the differential operator which acts on ${\cal E}^{0}$ given by
\be 
{\mathbb D} =  \begin{pmatrix}
   ~\,  \dd~&~\tau ~&~0~&~0~&~~0~\\
   ~ ~ 0~&~\dd_{\zeta}~&~{\cal F}^{\dagger}&~{\cal R}^{\dagger}~
&~~~ {\tau}^{\dagger}~\\
~~0~& ~{\cal F}~&~\dd_{A}~&~0~&~~\,0~\\
~~0~& ~{\cal R}~&~0~&~~\dd_{\Theta}~~&\quad 0~\\
~~0~&~0~&~0~&~0~& - \dd~
  \end{pmatrix}
  =
  \begin{pmatrix}
   ~\, \dd~&~{\bf \tau} ~&~~0~\\
 ~~ {\bf 0}~&~{\cal D}~&~~~ {\bf\tau}^{\dagger}~\\
  ~~0~&~{\bf 0}~&- \dd~
  \end{pmatrix}~.
  \label{eq:Dcheck}
  \ee
From the computations in section \ref{ssec:deldelW}, in particular from the symmetrisation of the second order variation of the superpotential,  it is clear that
$\mathbb D$ is self adjoint with respect to $\langle\cdot,\cdot\rangle_{\cal E}$ and that $\tau^{\dagger}$
is precisely the adjoint of $\tau$ with respect to this pairing. 

{Clearly, the equations for moduli, \eqref{eq:ddeldil}-\eqref{eq:pi1dbeta}, can be recast in terms of $\mathbb D {\mathcal{Z}}$, provided that the required projections to $G_2$ irreducible representations are enforced. Defining ${\newD}$ to be the projected differential operator, \eqref{eq:ddeldil}-\eqref{eq:pi1dbeta} are equivalent to  }
\[
\check{\mathbb D}{\cal Z} =  \begin{pmatrix}
   ~\, \dd~&~\check{\tau} ~&~~0~\\
 ~~ {\bf 0}~&~\check{\cal D}~&~~~ \check{\tau}^{\dagger}~\\
  ~~0~&~{\bf 0}~&- \check\dd~
  \end{pmatrix}\, {\cal Z}= 0~,
\]
where {the check denotes projection to the $\bf{1}$ and $\bf{7}$ $G_2$ representations, as appropriate, and}
\[
\check\tau(y) = \tau(\mathring M)~,\qquad \check{\tau^{\dagger}}(\beta)  = - \pi_{\bf 7 }(\hat\dd\beta)_{a}~.
\]
{In chapter \ref{sec:defineD}, we will give a definition of $\newD$ acting on other tensor-valued $p$-forms.}

\subsection{Trivial deformations and a first look at infinitesimal moduli}

For completeness, in this subsection we comment briefly on the trivial deformations of the system.  A full discussion is deferred to the next chapter.  

The infinitesimal variations ${\cal Z} \in \cE^0$ are not yet the infintesimal moduli
of the heterotic $G_2$ system. We need to quotient out by the trivial deformations of the system, corresponding to diffeomorphisms and gauge symmetries. As we will see later, these trivial deformations can be shown to be
 \begin{equation}
 \begin{split}
 \label{eq:trivtrans}
 2(\hat\delta\phi)_{triv} &= 
2\, \left( {\cal L}_{V}\phi - \frac{2}{7}\,\tr M_{triv}\right)  \\
  y^{a}_{triv} &= \dd_{\zeta}V^{a} + {\cal F}^{\dagger}(\epsilon)^{a} + (\hat\dd\mu)^{a}~,
  \\
  \alpha_{triv} &= \dd_{A}\epsilon + {\cal F}(V)
  \\
  \kappa_{triv} & =  \dd_{\Theta}\lambda + {\cal R}(V)
  \\
  \beta_{triv} & = \dd\mu
 \end{split}
 \end{equation}
where $V\in \Omega^{0}(Y, TY)$, $\mu \in \Omega^1(Y)$, $\epsilon \in \Omega^{0}(Y, {\rm End}(V))$, $\lambda  \in \Omega^{0}(Y, {\rm End}(TY))$. There is also a gauge-of-gauge symmetry since the $B$ field is a 2-gerbe:
\be \label{eq:trivtrans2}
\mu_{triv} = \dd \nu \, , \; \nu \in \Omega^0(Y) \; .
\ee

The trivial deformations of the heterotic $G_2$ system corresponding to the $\cQ$-valued elements $Z = (y, \alpha, \kappa)^T$ were explained in detail in  Ref.~\cite{delaOssa:2017pqy}, where it was shown that there exists a canonical $G_2$ complex
 \begin{equation}
 \label{eq:qcomplex}     0\rightarrow\Omega^0(\cQ)\rightarrow\Omega^1(\cQ)\rightarrow\Omega^{2}_{\bf 7}(\cQ)\rightarrow\Omega^{3}_{\bf 1}(\cQ)
 \rightarrow 0
 \end{equation}
 which is elliptic. The nilpotent differential  of this complex is the operator $\check\cD$ defined in the previous subsection. Focusing on the $\cQ$-valued elements of the deformations,  it can be shown that the infinitesimal moduli are captured by the cohomology group $H^1(Y,\cQ)$  \cite{delaOssa:2017pqy}.
 
 In the next section, we will write down an analogous complex  that also includes the degrees of freedom related to the dilaton and $B$-field.

\section{Defining the complex}
\label{sec:complex}

In the last section, we showed that the equations satisfied by the infinitesimal deformations for the heterotic $G_2$ system correspond to 
\[
 \check{\mathbb D}{\cal Z}= 0~,
 \]
a constraint which ensures the vanishing of the second order variation of the heterotic $G_2$ superpotential. We also made some observations about the trivial deformations of the system. 

In the rest of this paper, we will interpret this constraint as the equations of motion of an action, which we will write as 
\begin{equation}
    \label{eq:action}
  S= \langle {\cal Z}, \newD \,{\cal  Z} \rangle_{\cal E}  ~.
 \end{equation} 
This will allow us to formalize and extend the observations made for infinitesimal deformations.  To do so, we start in this section, by constructing a  complex 
 \begin{equation}
 \label{eq:bicomplex}
     0\rightarrow\cE^{-2}\rightarrow\cE^{-1}\rightarrow
     \cE^{0}\rightarrow\cE^{1}\rightarrow\cE^{2}\rightarrow\cE^{3}\rightarrow 0
 \end{equation}
where the arrows correspond to action by the operator $\newD$, which extends the operator defined above to act on elements in $\cE^d$, vector spaces of homological degree $d$. In more detail, 
\[
\cE^*=\check{\Omega}^*[2]\oplus\check{\Omega}^*(\cQ)[1]\oplus\check{\Omega}^{4+*} \, .
\]
where the numbers within the brackets denote the shift of the homological degree, { and the checks denote the appropriate projections to $G_2$ representations,\footnote{Although it will not be of relevance for the present analysis, we note that \eqref{eq:bicomplex} is in fact the total complex of a bicomplex, see section \ref{app:bicomplex}.}
e.g. 
\begin{equation*}
\cE^0=(\check{\Omega}^*[2]\oplus\check{\Omega}^*(\cQ)[1]\oplus\check{\Omega}^{4+*}) ^0
=\check{\Omega}^{2}\oplus \check{\Omega}^1(\cQ)\oplus \check{\Omega}^{4}=\Omega^{2}_{\bf 7}\oplus {\Omega}^1(\cQ)\oplus \Omega^{4}_{\bf 1} \; .   
\end{equation*}}

This complex encodes the equations of moduli and symmetries of the heterotic $G_2$ system. 
The aim of this section is to work out the properties of this  complex in detail. To that end, let us first write out the complex in the following form  
\begin{equation}\label{diag:defn1}
	\begin{tikzcd}
		& & \Omega^{4}_{\bf 1}\arrow{r}{\check{\dd}}&\Omega^{5}_{\bf 7}\arrow{r}{\check{\dd}}&\Omega^6\arrow{r}{\check{\dd}}&\Omega^7\\
		& \Omega^0(\cQ)\arrow{r}{\cD}\arrow{ru}{\Tmap^{(-1)}}&\Omega^1(\cQ)\arrow{r}{\Psi\w\cD}\arrow{ru}{\Tmap^{(0)}}&\Omega^6(\cQ)\arrow{r}{\cD}\arrow{ru}{\Tmap^{(1)}}&\Omega^7(\cQ)\\
		\Omega^0\arrow{r}{-\check{\dd}}&\Omega^1\arrow{r}{-\check{\dd}}\arrow{ru}{\Smap^{(-1)}}&\Omega^{2}_{\bf 7}\arrow{r}{-\check{\dd}}\arrow{ru}{\Smap^{(0)}}&\Omega^{3}_{\bf 1}\arrow{ru}{\Smap^{(1)}}
	\end{tikzcd}
\end{equation} 
The presentation of the complex is chosen in order to simplify our analysis below, which possibly makes the connection to the heterotic $G_2$ system, and the $\cal Q$ complex \eqref{eq:qcomplex} a bit opaque.\footnote{An alternative, isomorphic, presentation would be to work with $\check {\cal D}$ differentials in the middle row, as in \eqref{eq:qcomplex}. } 
 Each row of this total complex is an elliptic complex, that are all modelled on the canonical $G_2$ complex introduced in \cite{ReyesCarrionPhd,ReyesCarrion:1998si,fernandez1998dolbeault} and the $\cal Q$ complex of \cite{delaOssa:2017pqy}.  This implies, in particular, that the complex \eqref{diag:defn1} has nilpotent horizontal differentials.
In more detail, we have that 
\begin{itemize}
	\item The nilpotent differential $\check{\dd}$ is the canonical $G_2$ differential, obtained by projecting the image of the de Rham differential to the relevant $G_2$ irreducible representation;  
     \item $\Psi=e^{-2\phi}\psi$ is the closed 4-form of the integrable $G_2$ structure, and $\cal D$ is the differential defined in equation \eqref{eq:DonQ};
    \item The bottom complex is, up to a sign, the canonical $G_2$ complex of \cite{ReyesCarrionPhd,ReyesCarrion:1998si,fernandez1998dolbeault};
    \item The top complex is isomorphic to the canonical $G_2$ complex (wedge with $\Psi$);
	\item The middle complex is isomorphic (again by wedge with $\Psi$) to the $\cQ$-valued complex defined in \eqref{eq:qcomplex}, with  nilpotent differential $\check \cD$. 
	\item The maps $\Smap^{(p)}$ are the components of the de Rham differential that are projected  out in the canonical $G_2$ differential. $\Tmap^{(-p)}$ are their adjoints, with respect to the shifted symplectic pairing of the complex. These maps are determined by the symmetries and equations of the heterotic $G_2$ system, as detailed below.
\end{itemize}

The main benefit with this presentation of  the total complex, is that it has a very simple shifted symplectic pairing (just wedge and integrate): 
\begin{equation}
\label{eq:shiftsymppair}
    \begin{split}
        	\omega(\xi,\eta)=\langle \xi,\eta\rangle_{\cal E} &=\int \xi\w\eta \;  \; , \;  \;\;  \;\xi,\eta\in \check{\Omega}^{4+*}\oplus\check{\Omega}^*[2]\\
	\omega (X,Y)=\langle X,Y\rangle_{\cal E} &=\int g_\cQ(X\w Y)\;  \; , \; \; X,Y\in\check{\Omega}^*(\cQ)[1]\,.
    \end{split}
\end{equation} 
where $g_\cQ=g\oplus \tr$ is a natural metric on the $\cQ$-bundle, defined in \eqref{eq:innerQ} using the $G_2$ metric, $g$, and taking the trace on vector bundle indices. We note that $\cal D$ is metric with respect to $g_\cQ$.  Moreover, with respect to this pairing, 
 the $\cQ$-row is orthogonal to the top and bottom rows of the complex. 

 The shifted symplectic form induces the signs displayed in the complex. Since $\omega$ is closed under the differential $\newD$ of the complex {(here $|\alpha|$ denotes homological degree)}
 \begin{equation} \label{eq:closedomega}
0 = \newD \omega (\alpha, \beta) =  \omega (\newD\alpha, \beta) + (-1)^{|\alpha|} \omega (\alpha, \newD\beta)  
 \end{equation}
 we find that there is a relative sign between the differentials in the top and bottom row of the complex, and relative signs between the vertical maps.

As a side remark, by defining the Lie algebroid, $\mfg$:
\begin{equation}
	\mfg:= \Omega^0(\cQ)[1]\stackrel{\cD}{\rightarrow}\Omega^1(\cQ)\,,
\end{equation}
we can regard $\check{\Omega}\oplus\check{\Omega}[2]\cong T^*[-1]\check{\Omega}$ and $\check{\Omega}(\cQ)[1]\cong T^*[-1](\mfg)$. From this perspective, the symplectic form defined above is the canonical symplectic form on a cotangent bundle.

\subsection{Defining the differential operator $\newD$ of the complex}
\label{sec:defineD}

We will now return to the precise definition of the differential 
\[
\newD^i : \cE^i \to \cE^{i+1}
\]
of the total complex. To define the correct differential $\newD$ we are guided by the following principles:
\begin{enumerate}
	\item It must be nilpotent;
	\item It must be self-adjoint with respect to $\omega$;
 \item At  homological degree zero, we should recover the action 
 \eqref{eq:action} from $\omega\left( {\cal Z}, \newD {\cal Z}\right) $;
	\item The full complex must be elliptic.
\end{enumerate}
In addition, we can interpret the $\cE^{(-1)}\rightarrow\cE^{(0)}$ as a gauge transformation, so we are here guided by the symmetries of the theory. In particular, the trivial deformations \eqref{eq:trivtrans} should be recovered at homological degree -1.

Our main task now is to determine the differential $\newD^{(i)}$, with the properties just listed. We start at homological degree zero, which should match the results of the last section (condition 3 on the above list). This is in part immediate: we recognize the infinitesimal deformations ${\cal Z}$ of the heterotic $G_2$ system  as elements in 
\[
\cE^0=
\Omega^{4}_{\bf 1}\oplus\Omega^1(\cQ)\oplus\Omega^{2}_{\bf 7} \, .
\]
Also, the horizontal parts of $\newD^{(0)}$  match, up to wedging with $\Psi$, the diagonal terms of self-dual differential operator defined in \eqref{eq:Dcheck}.  {In particular, the signs of the terms in the action, as presented in \eqref{eq:sym2} are consistent with the signs in the top and bottom row of the complex.}  Below, we will show that the maps $\Tmap^{(0)},  \Smap^{(0)}$ match the off-diagonal terms of \eqref{eq:Dcheck} in a similar fashion. 
This ensures that the homological degree zero part of the complex encodes the same equations of motion as the  action \eqref{eq:action}.

At homological degree -1, we are guided by the symmetries  of the theory. These are captured by $\cE^i$ with $i<0$ in the total complex. 
On formal grounds, for nilpotent $\newD$, the action \eqref{eq:action} should have a gauge symmetry corresponding to  $\newD$-exact variations, and this should match the trivial deformations of the heterotic $G_2$ system, \eqref{eq:trivtrans}.
We thus expect:
 \begin{equation} \label{eq:z0triv}
     {\cal Z}^{(0)}_{triv} =   \newD^{(-1)}{\cal Z}^{(-1)} , \mbox{ where } {\cal Z}^{(-1)} \in {\cal E}^{-1} \, = \, \Omega^{0}(Y, {\cal Q}) \oplus \Omega^{1}(Y)~.
 \end{equation}
where the first of the two summands contribute to the trivial  deformations of the $\cal Q$-bundle, and the second summand contain trivial, exact $\beta = \dd \mu$, parametrised by 1-forms $\mu$. These 1-forms  have gauge-for-gauge symmetries, \eqref{eq:trivtrans2} which we expect correspond to
  \begin{equation}
     {\cal Z}^{(-1)}_{triv} = \newD^{(-2)}{\cal Z}^{(-2)} , \mbox{ where } {\cal Z}^{(-2)} \in {\cal E}^{-2} \, = \, \Omega^{0}(Y)~.
 \end{equation}
 These spaces match the first columns of the total complex (or the lower left corner of the bicomplex), but we have yet to construct the differential operators $\newD^{(-1)},\newD^{(-2)}$.

Since the horizontal derivatives have already been determined, we can immediately deduce that $\newD^{(-2)}=-\check{\dd}$. For $\newD^{(-1)}$, we must also determine the maps $\Smap^{(-1)}, \Tmap^{(-1)}$.  Once these maps are determined, we can use self-duality and nilpotency to construct  $\Smap^{(1)}, \Tmap^{(1)}$; this is the topic of the  ensuing subsections. For ease of presentation we will first deduce the three $\Smap$ maps, and then the $\Tmap$, keeping an eye on the symmetries and equations of the heterotic theory along the way, so that these relevant constraints are reproduced by our formalism.

\subsubsection{The $\Smap^{(p)}$ maps}

The map $\Smap^{(-1)}:\Omega^1\rightarrow\Omega^1(\cQ)$ encodes a gauge symmetry of the variations $\beta$.

In our complex, the summand $\check{\Omega}[2]$, in degree zero, corresponds to a linear combination of the $\7$-part of the $B$-field and the $G_2$ structure variations, while  the $\1\4$ part of the $B$-field is embedded inside $\Omega^1(\cQ)$.
The degree -1 entry of $\check{\Omega}[2]$ is the space of gauge transformations of the full $B$-field variation and, consequentially, there is a component of our differential $\Smap^{(-1)}:\Omega^1\rightarrow\Omega^1(TY)\subset \Omega^1(\cQ)$ which, on an arbitrary one form $\mu$, acts by:
\begin{equation} \label{eq:sm1}
	(\Smap^{(-1)}\mu)^a=-(\hat{\dd}\mu)^a \, ,
\end{equation}
where we have defined $\hat{\dd}$ by the general formula $\dd=\check{\dd}+\hat{\dd}$ and the sign is motivated by the sign in the bottom row of the complex. 
In particular, in this case $\hat{\dd}=\pi_{\bf 14}\dd$.

Turning now to $\Smap^{(0)}:\Omega^{2}_{\bf 7}\rightarrow\Omega^6(\cQ)$, we can now be guided by the explicit computation of the second-order action.
On an arbitrary section $\beta\in\Omega^{2}_{\bf 7}$ we have
\begin{equation} \label{eq:sm2}
	(\Smap^{(0)}\beta)^a=-(\hat{\dd}\beta)^a\w\Psi\,,
\end{equation}
where  $\hat{\dd} = \pi_{{\bf 7} \oplus {\bf 27}} \dd$. 
This matches the map $\tau^\dagger$ wedged with $\Psi$, cf. equation  \eqref{eq:T1adj}, as required to reproduce the action \eqref{eq:sym2}.

Finally we have $\Smap^{(1)}:\Omega^{3}_{\bf 1}\rightarrow\Omega^7(\cQ)$. 
Guided by comparison with the maps in lower degree, we posit that
 \begin{equation} \label{eq:sm3}
	(\Smap^{(1)}\chi)^a=-(\hat{\dd}\chi)^a\w\Psi=-({\dd}\chi)^a\w\Psi
\end{equation}
 where, noting that $\check{\dd} \chi=0$, for $\chi \in \Omega^{3}_{\bf 1}$  we have $\hat{\dd}=\dd$. The correctness of this assertion will be verified when we compute its adjoint and see that it is, indeed, encoding a symmetry.
We turn to these computations now.

\subsubsection{The $\Tmap^{(p)}$ maps}
\label{sec:tpmaps}
Given our symplectic form, the maps $\Tmap^{(p)}$ are completely determined by the maps, $\Smap$, which we defined above.
The most important, for us, is that the adjoints $\Tmap^{(-1)}$ and $\Tmap^{(0)}$ are physically sensible. 

We will begin with the map $\Tmap^{(-1)}$ which we demand is adjoint to $\Smap^{(1)}$. First note that the gauge sector of $\cQ$ is irrelevant to this part of the story, and so will be suppressed.
Then, computing explicitly, we have for arbitrary $\sigma\in\Omega^0(TY),\,\chi\in\Omega^{3}_{\bf 1}$: 
\begin{align*}
	\omega( \Tmap^{(-1)}\sigma,\chi )&:=+\omega(  \sigma, \Smap^{(1)}\chi) \; \; \; \; \; \; \;\; \; \;\;\; \; \; \mbox{since } |\sigma|=-1 \mbox{ and } \newD \omega(\alpha,\beta)=0
    \\ 
	&=-\int \sigma^a (\dd\chi)_a\w\Psi \; \; \; \; \; \; \; \;\mbox{sign in $\Smap^{(1)}$ and } \sigma\in\Omega^0(TY)\\
	&=+\int (\sigma^a\Psi)_a\w \dd\chi\; \; \; \; \; \; \; \;\; \mbox{since } \dd \chi \w \Psi = 0\\
	&=+\int \dd((\sigma^a\Psi)_a)\w\chi\\
	&=+\int\pi_{\bf 1}\dd(\sigma^a\Psi_a)\w\chi\,.
\end{align*}
Since this holds for all $\chi$, we conclude that $\Tmap^{(-1)}\sigma=+\pi_{\bf 1}(\dd\sigma^a\Psi_a)\in\Omega^{4}_{\bf 1}$.

Since $\Psi$ is $\dd$-closed, we can re-express this map in terms of the Lie derivative, using Cartan's magic formula, 
\[
\Tmap^{(-1)}\sigma=+\pi_{\bf 1}\cL_\sigma\Psi \, .
\]
The physics interpretation of the space $\Omega^{4}_{\bf 1}$ is that it is the space 
of dilaton variations.
It is manifestly convenient to use the dilaton to redefine the $G_2$ four form, $\psi$ to the closed four form $\Psi$, and thus dilaton variations lead to variations in this form - it is thus more precisely variations of $\Psi$ which are parametrised by $\Omega^{4}_{\bf 1}$.
The group of diffeomorphisms has a natural action on the space of functions (which is where the dilaton lives) and this naturally induces a map out of the Lie algebra of diffeomorphisms to the space of dilaton variations, here encoded by $\Omega^{4}_{\bf 1}$.
This is \emph{precisely} the map $\Tmap^{(-1)}$ that we have identified, so we are satisfied.

Next we turn to $\Tmap^{(0)}$, the adjoint to $\Smap^{(0)}$.
Similar computations to the above yield: 
\begin{align*}
	\omega( \Tmap^{(0)} M,\beta )&:=-\omega( M,\Smap^{(0)}\beta)\; \;\; \;\; \;\; \;\; \;\; \;\; \mbox{since } |M|=0 \mbox{ and } \newD \omega(\alpha,\beta)=0\\
	&=\int M^a\w (\hat{\dd}\beta)_a\w\Psi \; \; \; \;\;\mbox{ recall sign in } \Smap^{(0)}\\
	&=\int (M^a\w \Psi)_a\w \hat{\dd}\beta\\
	&=\int\pi_R(M^a\w\Psi)_a\w \dd\beta\\
	&=-\int \dd\pi_R(M^a\w\Psi)_a\w\beta\,,
\end{align*}
where we defined $\pi_R:\1\oplus\7\oplus\2\7\rightarrow \7\oplus\2\7$. This shows that at homological degree 0, $\omega({\cal Z}, \newD^{(0)} {\cal Z}) $ is equivalent to the action \eqref{eq:action}. Furthermore, through a straightforward, but somewhat tedious, computation one may show that $\Tmap^{(0)}$ agrees with the map $\tau$ in \eqref{eq:T1}; we omit the details here.

Finally, we must compute $\Tmap^{(1)}$:
\begin{align*}
		\omega( \Tmap^{(1)}\eta,\mu) &=\omega(\eta,\Smap^{(-1)}\mu) \\
				&=-\int \eta^a\w (\hat{\dd}\mu)_a  \;\;\;\;\;\; \mbox{ recall sign in } \Smap^{(-1)}\\
	&= +\int \dd\pi_{\bf 14}(\eta^a_{\;\,a})\w \mu\,.
\end{align*}
This map does not have a physical motivation, but this is not an issue, as it is deduced from $\Smap^{(-1)}$, which has been shown encodes a gauge symmetry of $\beta$.

\subsection{Nilpotency and ellipticity}
Having defined the components of a self-adjoint differential, we must now ensure that it is, in fact, nilpotent. 
We can write our differential as a matrix:
\begin{equation}
	\newD=\left(\begin{array}{ccc}
		\check{\dd}&\Tmap & 0\\
		0 &\check{\cal D}&\Smap\\
		0 & 0 &-\check{\dd}
	\end{array}\right)
\end{equation}
{where we redefine our notation so that, in order to match the middle complex of \eqref{diag:defn1},  we have $\check{\cal D} = \{{\cal D}, \Psi \w {\cal D}, {\cal D}\}$ at homological degree $\{-1,0,1\}$. }
This squares to:
\begin{equation}
	\newD^2=\left(\begin{array}{ccc}\check{\dd}^2 \; \;  & \check{\dd}\Tmap+\Tmap\check{\cD} \; &\Tmap\Smap\\
0&\check{\cD}^2&\check{\cD}\Smap-\Smap\check{\dd} \\
0&0&\check{\dd}^2\end{array}\right)\,.
\end{equation}
We know well that the contributions on the diagonal vanish, which means that we need to verify the remaining entries are identically zero.
That means verifying that the compositions $\Tmap^{(p+1)}\circ \Smap^{(p)}$ vanish, 
and that each of the squares in total complex \eqref{diag:defn1} anticommute.

\subsubsection{The $\Tmap\circ \Smap$ compositions}

Nilpotency requires that the compositions $\Tmap^{(1)}\circ \Smap^{(0)}$ and $\Tmap^{(0)}\circ \Smap^{(-1)}$ both vanish.
In fact, it suffices to check only one of these compositions, because they are adjoint to each other.
The argument we give will make this adjointness manifest.

Indeed, non-degeneracy of the symplectic form implies that $\Tmap^{(1)}\Smap^{(0)}\beta=0$ if and only if $\omega (\mu,\Tmap^{(1)}\Smap^{(0)}\beta ) =0$ for all $\mu\in\Omega^1$.
By definition, this is true if and only if $\omega ( \Smap^{(-1)}\mu,\Smap^{(0)}\beta) =0$.
It can be noted that starting with $\Tmap^{(0)}\Smap^{(-1)}$ and taking the adjoint of $\Tmap$ would give precisely the same pairing here.
Expanding this out, we have:
\begin{align*}
	\omega( \Smap^{(-1)}\mu,\Smap^{(0)}\beta) &= \int (\hat{d}\mu)^a\w (\hat{d}\beta)_a\w\Psi\,,
\end{align*}
which is a $G_2$ equivariant map $\1\4\otimes (\7\oplus \2\7)\rightarrow \1$ and must, therefore, be vanishing.

\subsubsection{Anticommuting squares}

The final check of nilpotency is that  each of the squares in diagram \eqref{diag:defn1} anticommute.  
While the computations are straightforward, and similar to the ones presented above (i.e. we will make use of $G_2$ representation theory as above), we include them for completeness. 

\paragraph{Gauge contributions to the squares}

The maps $\Smap$ and $\Tmap$ act trivially on the gauge sector, so it may appear that gauge fields only appear in the $\check{\cD}^2$ computation.
This is not true, however, because $\check{\cD}$ intermixes the gauge and vector field sectors.
Nevertheless, the only possible problems are in checking anticommutativity of the squares; composing the operators we see that one leg trivially vanishes, while the vanishing of the other must be verified.

The trivial directions arise because the $\Tmap$ and $\Smap$ maps are identically zero on gauge fields.
The nontriviality arises in two ways: the $\check{\cD}$ operator maps gauge sector to the vector field sector (relevant for the $\Tmap$ squares); and, the $\check{\cD}$ operator maps vector fields to the gauge sector (relevant for the $\Smap$ squares).

We will start with a section $\alpha\in\Omega^0(\mfg^{ad})\subset \Omega^0(\cQ)$ and must verify that $\Tmap^{(0)}\check{\cD} \alpha =\Tmap^{(0)}\cD\alpha=0$.
Recall from section \ref{sec:Dops} that the vector field component of $\cD\alpha$ is given, up to overall factors, by:
\begin{equation}
	(\cD\alpha)^a\propto \tr(\alpha F^a)\,,
\end{equation}
where $F$ is the gauge curvature, which is a $G_2$ instanton by the equations of motion.

Then:
\begin{align*}
	\Tmap^{(0)}\cD\alpha \propto\pi_{\bf 7}d\pi_R(\tr(\alpha\w F^a)\w\psi'_a) \; 
	=0 \,.
\end{align*}
The result follows trivially from the observation that, for a one form valued in the tangent bundle, say $N$, we have $i_N\psi'$ is independent of the $\1\4$, while $F^a$ is a one form valued in the tangent bundle comprised only of the $\1\4$.
In other words, this map vanishes because there are no nontrivial $G_2$ maps $\1\4\otimes \1\rightarrow \1\oplus\7\oplus\2\7$.

Consider now $\kappa\in\Omega^1(\mfg^{ad})$ and its contribution to $\newD^2$:
\begin{align*}
	\Tmap^{(1)}(\cD\w\Psi)\kappa&\propto d\pi_{\bf 14}(\tr(\kappa\w F^a)\w\Psi)_a\\
	&=d\pi_{\bf 14} (\tr(\kappa_a\w F^a)\w\Psi)=0\,,
\end{align*}
where we used that $F^a\w\Psi_a=0$ to get to the second line, and then used that the resulting expression is clearly in the $\7$, so killed by the projection.

We now turn to the possible contributions in the $\Smap$-squares, where the fact that $\cD$ maps a vector field to a gauge field is the new concern.
In particular:
\begin{align*}
	\check \cD \Smap^{(-1)}\mu&\propto (\hat{d}\mu)^a\w F_a\w\Psi\\
	&=(\hat{d}\mu)^a_{\;\;a}\w F\w\Psi- ((\hat{d}\mu)^a\w F\w\Psi)_a-(\hat{d}\mu)^a\w F\w\Psi_a
	=0\,.
\end{align*}
where we used that contracting an index obeys the Leibniz rule, and that each term vanishes separately due to both $F$ and $\hat{d}\mu$ being in the $\1\4$.

The final gauge theoretic concern is:
\begin{align*}
	\check \cD \Smap^{(0)}\beta =\cD \Smap^{(0)}\beta = (\hat{d}\beta)^a\w F_a\w\Psi \; = 0\,,
\end{align*}
where we have made the simple observation that there are no non-zero $G_2$ maps $(\7\oplus\2\7)\otimes \1\4\rightarrow\1$.

\paragraph{Possible vector field contributions}

The remaining obstructions to nilpotency are vector field contributions, i.e. $\check \cD \Smap -\Smap\check{\dd}$ and $\check{\dd} \Tmap +\Tmap\check \cD $.
Adjointness implies that we need only check the $\Smap$, or respectively $\Tmap$, type squares, but we shall show both as a non-trivial check of our prior computations. 
This means that there are four possible contributions to check.

We will start in the top-left corner of Diagram \ref{diag:defn1}:
\begin{align*}
	\check{\dd}\Tmap^{(-1)}\sigma + \Tmap^{(0)}\cD\sigma  &=\check{\dd}(\pi_{\bf 1}\dd(\sigma^a\Psi_a)) - \check{\dd}(\pi_R (\dd_\zeta\sigma^a\w\Psi)_a)\\
	&=\check{\dd} \left(\pi_{\bf 1} \big( \dd_\zeta\sigma^a\Psi_a+\sigma^a\dd_\zeta\Psi_a\big) -\pi_R ((\dd_\zeta\sigma^a)_a \Psi-\dd_\zeta\sigma^a\w\Psi_a) \right) \\
    &=\check{\dd}\left( (\dd_\zeta\sigma)^a\Psi_a+ \pi_{\bf 1}\sigma^a\dd_\zeta\Psi_a \right)  \\
    &=\check{\dd}\left(  (\dd_\zeta\sigma)^a\Psi_a+ \sigma^a\dd_\zeta\Psi_a\right)\\
    &=\check{\dd} \dd(\sigma^a\Psi_a)= 0\,,
\end{align*}
where in the 2nd to last equality  we use that $\pi_{\bf 1} \dd_\zeta\Psi_a=\dd_\zeta\Psi_a$. The vanishing of this term is the first check that the signs in the complex are correctly defined, given our starting point (minus signs in top row and definition of $\Smap^{(0)}$).

Moving to the next square at the top, the two terms are $\check{\dd}\Tmap^{(0)}$ and $\Tmap^{(1)}\check \cD$.
Computing the first:
\begin{align*}
	\check{\dd}\Tmap^{(0)}M&=-\dd\pi_{\bf 7}\dd\pi_R( M^a\w\Psi)_a=\dd\pi_{\bf 7}\dd\pi_R( M^a\w\Psi_a)\\
	&=\dd\pi_{\bf 7}\dd(M^a\w\Psi_a-\pi_{\bf 1}(M^a\w\Psi_a)) = \dd\pi_{\bf 7}\dd (M^a\w\Psi_a)\, \mbox{, since}\\
	\dd\pi_{\bf 1}(M^a\w\Psi_a)&\propto \dd(\tr M\w\Psi) \propto \dd\tr M\w\Psi  \, .
\end{align*}
and the second:
\begin{align}
	\Tmap^{(1)}\check \cD M = \Tmap^{(1)}\cD M\w\Psi&=+\dd\pi_{\bf 14}\big( (\dd_\zeta M^a\w\Psi)_a\big)\notag\\
	&=\dd\pi_{\bf 14}\left((\dd_\zeta M)^a_{\;\,a}\w\Psi+\dd_\zeta M^a\w\Psi_a\right)\notag\\
	&=\dd\pi_{\bf 14}(\dd_\zeta M^a\w\Psi_a)&&\text{since first term in 7}\,;\notag\\
	\dd(M^a\w\Psi_a)&=\dd_\zeta M^a\w\Psi_a-M^a\w \dd_\zeta\Psi_a\notag\\
	&=\dd_\zeta M^a\w\Psi_a-M^a\w\nabla_a\Psi 
    \notag\\
	&\equiv \dd_\zeta M^a\w\Psi_a\mod \ker \pi_{\bf 14}\label{eq:middle}\\
        \implies \Tmap^{(1)}\cD M\w\Psi&=+\dd \pi_{\bf 14}\dd(M^a\w\Psi_a)\notag
\end{align}
Consequently,
\[
\check{\dd}\Tmap^{(0)} + \Tmap^{(1)}\check \cD = 
+\dd^2 (M^a\w\Psi_a) = 0
\]

The $\Smap$ squares also work out as they should. I.e.
\begin{align*}
	(\check{\cD} S^{(-1)}\mu)_a&
 =-\dd_\zeta((\hat{\dd}\mu))_a\w\Psi
 =-[(\nabla_a\hat{\dd}\mu\Psi)-(\dd\hat{\dd}\mu)_a\w\Psi] =+(\dd\hat{\dd}\mu)_a\w\Psi\,;
\end{align*}
and 
\begin{align*}
	S^{(0)}\check{\dd}\mu&= -(\hat{\dd}\check{\dd}\mu)_a\w\Psi=-(\dd\check{\dd}\mu)_a\w\Psi\,.
\end{align*}
Thus, clearly $\check{\cD} S^{(-1)}\mu-S^{(0)}\check{\dd}\mu=0$.  In the above we have used Lemma 2 of \cite{delaOssa:2017pqy} 
\[
\dd_\zeta z_a = -\dd z_a + \nabla_a z \; , \; z \in \Omega^1(T^*Y) \; .
\]

And finally,  we have the bottom-right square contributions, which come from $S^{(1)}\circ \check{d}$ and $\check{\cD}\circ S^{(0)}$.
The first of these can be computed to be (using the lemma again)
\begin{align*}
	\cD S^{(0)}\beta&= -\dd_\zeta((\hat{\dd}\beta)_a\w\Psi)\\
	&=-\dd_\zeta(\hat{\dd}\beta)_a\w\Psi\\
	&=-\nabla_a\hat{\dd}\beta\w\Psi+(\dd(\hat{\dd}\beta))_a\w\Psi\\
	&=(\dd\hat{\dd}\beta)_a\w\Psi\,;
\end{align*}
and the second:
\begin{align*}
	S^{(1)}\check{\dd}\beta&=-\dd(\check{\dd}\beta)_a\w\Psi\,.
\end{align*}
Again, adding the terms with a relative minus sign gives zero.

We conclude that the  sequence written down in \eqref{diag:defn1}, with $\Smap$ defined as in \eqref{eq:sm1}, \eqref{eq:sm2},\eqref{eq:sm3} and $\Tmap$ defined by duality, has a nilpotent operator $\newD$.\footnote{We also remark that the nilpotency condition for $\newD$, corresponds to a $G_2$ instanton condition for $\mathbb{D}$ defined in \eqref{eq:Dcheck}, though strictly speaking $\mathbb{D}$ is not a connection. Thus the ``instanton for instanton'' property that has been observed for complexes related to heterotic systems \cite{delaOssa:2017gjq,Silva:2024fvl} prevails also for the $\cal{E}$ complex.}

We now have all the information to prove the following proposition
\begin{proposition}
	The sequence \eqref{diag:defn1} is an elliptic complex with shifted symplectic form $\omega$.
\end{proposition}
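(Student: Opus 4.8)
The plan is to establish the two defining properties of an elliptic complex with shifted symplectic form separately: first that $\omega$ is compatible with $\newD$ (which is essentially already in hand), and second that the complex is elliptic, i.e.\ that the symbol sequence is exact away from the zero section. The first point follows immediately from the nilpotency and self-adjointness of $\newD$ established in the preceding subsection: nilpotency gives $\newD^2 = 0$, so \eqref{diag:defn1} is a complex, and equation \eqref{eq:closedomega} together with the self-adjointness of $\newD$ with respect to $\omega$ (which was built into the construction of the $\Tmap$ maps as adjoints of the $\Smap$ maps) shows that $\omega$ is $\newD$-closed. The non-degeneracy of $\omega$ is clear from \eqref{eq:shiftsymppair}, since wedge-and-integrate pairings between $\check\Omega^{4+*}$ and $\check\Omega^*[2]$, and the $g_\cQ$-pairing on $\check\Omega^*(\cQ)[1]$, are perfect pairings of $G_2$ representations. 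So the only real work is ellipticity.

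For ellipticity I would argue at the level of the principal symbol. The complex \eqref{diag:defn1} is, by the bullet points in section \ref{sec:complex}, built from three rows, each of which is an elliptic complex: the bottom row is (a sign-twisted copy of) the canonical $G_2$ complex of \cite{ReyesCarrionPhd,ReyesCarrion:1998si,fernandez1998dolbeault}, the top row is its image under $\wedge\Psi$, hence also elliptic, and the middle row is the $\cQ$-valued complex \eqref{eq:qcomplex} of \cite{delaOssa:2017pqy}, which is stated there to be elliptic. The key observation is that the vertical maps $\Smap^{(p)}, \Tmap^{(p)}$ are \emph{zeroth order} where they involve the torsion and curvature terms, but the pieces that enter the symbol, $-\hat\dd$ and $\pi_{\bf 1}\dd(\,\cdot\,\Psi)$ etc., are \emph{first order}. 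So the principal symbol $\sigma_\xi(\newD)$ at a nonzero covector $\xi$ is itself upper-triangular with the symbols of the three horizontal elliptic differentials on the diagonal and the symbols of the vertical first-order maps above the diagonal.

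The cleanest way to conclude is then a spectral-sequence / mapping-cone argument: the total symbol complex $(\cE^\bullet, \sigma_\xi(\newD))$ is filtered by the three rows, and the associated graded is the direct sum of the three horizontal symbol complexes, each of which is exact for $\xi \neq 0$ because each row is elliptic. Exactness of the associated graded implies exactness of the total complex (the spectral sequence of the filtration degenerates at $E_1 = E_\infty = 0$), so $(\cE^\bullet, \sigma_\xi(\newD))$ is exact for all $\xi \neq 0$, which is precisely ellipticity. Equivalently, one can phrase \eqref{diag:defn1} as an iterated mapping cone of the horizontal complexes and use that a cone of a quasi-isomorphism (here, a map between acyclic symbol complexes) is acyclic. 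One should double-check that the vertical maps are genuinely morphisms of complexes at symbol level, but this follows from the anticommuting-squares computations already carried out in the nilpotency proof, since those identities pass to the symbol.

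The main obstacle I anticipate is purely bookkeeping: making sure that the homological-degree shifts $[2]$, $[1]$ and the $4+*$ indexing line up so that the three rows really do sit inside \eqref{eq:bicomplex} as claimed, and that the symbol of each row is the symbol of the \emph{same} elliptic $G_2$ complex (up to the harmless isomorphism $\wedge\Psi$ and a sign). Since $\Psi$ is nowhere vanishing, wedging with it is a bundle isomorphism and does not affect exactness of symbol sequences, so once the degree bookkeeping is pinned down the argument is essentially immediate. I would present the filtration/associated-graded argument as the proof proper, with the row-by-row identification of symbols as the supporting lemma, and relegate the explicit check that the vertical symbols commute with the horizontal ones to a remark pointing back at the anticommuting-squares computation above.
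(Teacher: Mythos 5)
Your proposal is correct and follows essentially the same route as the paper: the appendix proof also reduces ellipticity to the upper-triangular structure of $\newD$ with respect to the three rows, the known ellipticity of the canonical $G_2$ complex and of the $\check{\cal D}$-complex on the diagonal, and the fact that the off-diagonal $\Smap$, $\Tmap$ entries are first-order so they do not disturb the principal symbol's triangular form. Your filtration/spectral-sequence (mapping-cone) justification is simply the rigorous version of the ``follows by linear algebra'' step that the paper packages as a lemma on upper-triangular operators with elliptic diagonal entries, and your preliminary remarks on $\omega$ restate compatibility properties already established in the main text.
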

\begin{proof}
	 see appendix \ref{sec:elliptic}.
\end{proof}

\subsection{The bicomplex}
\label{app:bicomplex}

On formal grounds, the above computations show that we can associate a bicomplex to the theory described by the action \eqref{eq:action}. This was suggested, but not explicitly demonstrated in \cite{Kupka:2024rvl}. The bicomplex is constructed from the following components
\begin{itemize}
    \item  bigraded vector spaces $\cB^{i,j}$
    \item horizontal differentials $\dd_h$
    \item vertical differentials $\dd_v$
    \item the differentials anti-commute $\dd_h\dd_v+\dd_v\dd_h=0$
    \item a shifted symplectic pairing $\omega(\cdot,\cdot)$. 
\end{itemize}

We write the natural bicomplex of the heterotic $G_2$ system as
\[\begin{tikzcd}
	& {-1} & 0 & 1 & 2 \\
	1 & {\Omega^{4}_{\bf 1}} & {\Omega^{5}_{\bf 7}} & {\Omega^6} & {\Omega^7} \\
	0 & {\Omega^0(\cQ)} & {\Omega^1(\cQ)} & {\Omega^6(\cQ)} & {\Omega^7(\cQ)} \\
	{-1} & {\Omega^0} & {\Omega^1} & {\Omega^{2}_{\bf 7}} & {\Omega^{3}_{\bf 1}}
	\arrow[from=2-2, to=2-3]
	\arrow[from=3-2, to=3-3]
	\arrow[from=4-2, to=4-3]
	\arrow[from=4-3, to=4-4]
	\arrow[from=3-3, to=3-4]
	\arrow[from=3-4, to=3-5]
	\arrow[from=2-3, to=2-4]
	\arrow[from=2-4, to=2-5]
	\arrow[from=3-2, to=2-2]
        \arrow[from=3-5, to=2-5]
        \arrow[from=4-2, to=3-2]
	\arrow[from=4-3, to=3-3]
	\arrow[from=3-3, to=2-3]
	\arrow[from=3-4, to=2-4]
	\arrow[from=4-4, to=3-4]
	\arrow[from=4-4, to=4-5]
	\arrow[from=4-5, to=3-5]
\end{tikzcd}\]
where we have defined
\begin{align*}
	\cB^{*,1}&:=\check{\Omega}^{4+*}[1]\,;\\
	\cB^{*,0}&:=\check{\Omega}^*(\cQ)[1] \,;\\
	\cB^{*,-1}&:=\check{\Omega}^*[1]\,,
\end{align*}
where the number in the bracket denotes the shift and the check denotes taking the relevant $G_2$ irreducible representation, for example $\cB^{0,1} = \Omega^{5}_{\bf 7}$. The horizontal arrows are given by the relevant differentials associated to each sub-complex, while the vertical arrows are given by the $\Smap$ and $\Tmap$ maps, except the vertical arrows $\Omega^0\rightarrow\Omega^0({\cal Q})$ and $\Omega^7({\cal Q})\rightarrow\Omega^7$ which are the zero maps.

Associated to the bicomplex is a total complex given by 
 \begin{equation}
     0\rightarrow\cE^{-2}\rightarrow\cE^{-1}\rightarrow
     \cE^{0}\rightarrow\cE^{1}\rightarrow\cE^{2}\rightarrow\cE^{3}\rightarrow 0
 \end{equation}
where 
\[
\cE^i:={\rm Tot}(\cB)^*=\bigoplus_{j+k=i}\cB^{j,k} \, ,
\] 
Of course, this total complex is nothing but the complex \eqref{eq:bicomplex} defined in section \ref{sec:complex}.

\subsection{Infinitesimal moduli and obstruction theory} 

As mentioned at the end of section \ref{sec:dW2}, with the constraint $\beta=0 = \hat{\delta}\phi$, the infinitesimal moduli of the heterotic $G_2$ system are captured by $H^1(Y,\cQ)$  \cite{delaOssa:2017pqy}. The action \eqref{eq:action}, and associated gauge symmetries given by $\newD$-exact forms, 
make it clear that the full infinitesimal moduli are captured by 
\[
    H^0_{\newD}(Y, {\cal E })=\frac{ \{ {\cal Z}^0 \in {\cal E}^0 \; |  \; \newD{\cal Z}^0=0 \} }
    { \{
    {\cal Z}^0 \in {\cal E}^0 \; |  \;  {\cal Z}^0 =\newD{\cal Z}^{-1}
    \} }
    \]
where the superscript denotes homological degree (note that the  fields in ${\cal E}^0$, say, have different form degree).

To understand the higher order deformations requires analysing finite order deformations $\cal Z$ of the superpotential of a background solution. The corresponding equation of motion is then a Maurer-Cartan equation of the form
\begin{equation}
\label{eq:MCeq}
    \check{\mathbb D}{\cal Z}+{\mathcal O}({\cal Z}^2)=0\:.
\end{equation}
Solutions to this equation parameterises finite deformations of the system, i.e. which of the infinitesimal moduli can be integrated to a finite deformation. Figuring out the explicit form of the higher order terms ${\mathcal O}({\cal Z}^2)$ requires an investigation of the full deformation theory of heterotic $G_2$ systems and its associated $L_\infty$-algebra. This is beyond the scope of this paper. We can however make some remarks. 

Note first that a perturbative expansion of the Maurer-Cartan equation \eqref{eq:MCeq} together with general deformation theory arguments leads to an obstruction, or Kuranishi, map 
\begin{equation*}
    {\cal O}\::\;H^0_{\newD}(Y, {\cal E })\rightarrow H^1_{\newD}(Y, {\cal E })\:,
\end{equation*}
where the un-obstructed deformations lie in the kernel of this map. The details of this map depends of course on the deformation problem at hand. For ``generic'' Maurer-Cartan equations, the expectation is that this map will be injective for $h^0_{\newD}(Y, {\cal E })\le h^1_{\newD}(Y, {\cal E })$ and surjective for $h^0_{\newD}(Y, {\cal E })\ge h^1_{\newD}(Y, {\cal E })$. In particular, for $h^0_{\newD}(Y, {\cal E }) = h^1_{\newD}(Y, {\cal E })$ the map is ``generically'' a bijection, just like a generic quadratic matrix is invertible. The expected, or ``virtual'', dimension of the moduli space in this case is zero, consisting of (possibly infinitely many) points. If this holds, and the moduli space can be compactified in some appropriate sense, there is then a natural enumerate invariant for topological spaces admitting heterotic $G_2$ solutions. Namely, how many such solutions exist for a given topology?

We thus want to establish the following:
\begin{proposition}
    For the heterotic $G_2$ system, $h^0_{\newD}(Y, {\cal E })= h^1_{\newD}(Y, {\cal E })$. 
\end{proposition}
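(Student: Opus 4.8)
The plan is to exploit the self-adjointness of $\newD$ with respect to the shifted symplectic pairing $\omega$ together with the ellipticity of the complex \eqref{diag:defn1}, which gives a finite-dimensional total cohomology and a well-defined ``Euler characteristic'' and Poincar\'e-type duality. First I would observe that since $\omega$ is a non-degenerate pairing on each $\cE^i$ (it pairs $\cE^i$ with $\cE^{-i+ c}$ for the appropriate constant $c$ dictated by the shifts $\check{\Omega}^{4+*}\oplus\check{\Omega}^*[2]\oplus\check{\Omega}^*(\cQ)[1]$; reading off the degrees, $\cE^i$ is paired with $\cE^{1-i}$), and $\newD$ is self-adjoint, the complex is isomorphic to its own dual up to a degree reversal. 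Concretely, $\omega$ induces isomorphisms $\cE^i \cong (\cE^{1-i})^*$ intertwining $\newD^i$ with $(\newD^{-i})^*$ (up to sign), so that $H^i_{\newD}(Y,\cE) \cong H^{1-i}_{\newD}(Y,\cE)^*$, hence $h^i_{\newD} = h^{1-i}_{\newD}$. Setting $i=0$ gives $h^0_{\newD} = h^1_{\newD}$ immediately.

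In more detail, the key steps are: (i) record the precise form-degree content of each $\cE^i$ from $\cE^*=\check{\Omega}^*[2]\oplus\check{\Omega}^*(\cQ)[1]\oplus\check{\Omega}^{4+*}$, and check that $\omega$ as defined in \eqref{eq:shiftsymppair} restricts to a perfect pairing $\cE^i\times\cE^{1-i}\to\IR$ (the top row $\check{\Omega}^{4+*}$ pairs with the bottom row $\check{\Omega}^*[2]$ via wedge-and-integrate, using Poincar\'e duality on $Y$ together with the fact that the relevant $G_2$ projections are mutually adjoint; the middle row $\check{\Omega}^*(\cQ)[1]$ pairs with itself via $g_\cQ$, again using Poincar\'e duality and that $g_\cQ$ is a fibrewise metric); (ii) invoke \eqref{eq:closedomega}, i.e. $\omega(\newD\alpha,\beta)+(-1)^{|\alpha|}\omega(\alpha,\newD\beta)=0$, which is exactly the statement that under the identification $\cE^i\cong(\cE^{1-i})^*$ the operator $\newD^i$ is $\pm$ the transpose of $\newD^{-i}$; (iii) conclude that the cochain complex $(\cE^\bullet,\newD)$ is isomorphic as a complex to the dual of $(\cE^\bullet,\newD)$ with reversed grading, whence $H^i_{\newD}\cong (H^{1-i}_{\newD})^*$; since the complex is elliptic all these spaces are finite-dimensional, so taking dimensions gives $h^i_{\newD}=h^{1-i}_{\newD}$, and $i=0$ is the claim.

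The main obstacle is step (i): verifying that the pairing $\omega$ is genuinely non-degenerate as a pairing $\cE^i\times\cE^{1-i}\to\IR$ at the level of the underlying vector spaces, which requires checking that the $G_2$-projections $\check{\phantom{x}}$ appearing in the definitions of $\check{\Omega}^k$ do not spoil non-degeneracy --- i.e. that the projected spaces $\Omega^{4}_{\bf 1}, \Omega^{5}_{\bf 7},\dots$ and $\Omega^{2}_{\bf 7},\Omega^{3}_{\bf 1},\dots$ are paired perfectly by wedge-and-integrate, which is true precisely because the Hodge star maps each irreducible $G_2$-summand of $\Omega^k$ isomorphically onto the corresponding summand of $\Omega^{7-k}$ and these are exactly the summands retained by the checks. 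A secondary, more bookkeeping-heavy point is getting the signs in (ii) consistent with the shifts so that the identification is a genuine chain isomorphism rather than merely a degree-wise isomorphism; but since $h^i$ only depends on the isomorphism class of the cohomology \emph{groups}, not on signs, even a sign error there does not affect the dimension count. One should also note explicitly that ellipticity (the previous proposition) is what guarantees the cohomologies are finite-dimensional so that ``$h^0=h^1$'' is a meaningful statement.
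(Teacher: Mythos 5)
Your overall strategy --- exploit the non-degeneracy of $\omega$ as a pairing $\cE^i\times\cE^{1-i}\to\IR$ together with the (anti)self-adjointness of $\newD$, i.e.\ \eqref{eq:closedomega}, to obtain a duality $h^i_{\newD}=h^{1-i}_{\newD}$ --- is the same one the paper uses, and your degree bookkeeping ($\cE^i$ pairs with $\cE^{1-i}$) is correct. However, there is a genuine gap at the central step. The claim that ``$\omega$ induces isomorphisms $\cE^i\cong(\cE^{1-i})^*$'' is not true: the cochain spaces are infinite-dimensional spaces of smooth sections, and a non-degenerate pairing only gives an \emph{injection} $\cE^i\hookrightarrow(\cE^{1-i})^*$. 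Consequently the deduction ``perfect pairing of complexes $\Rightarrow$ perfect pairing on cohomology,'' which is valid for finite-dimensional complexes, does not apply here; the fact that the pairing induced by $\omega$ on cohomology is non-degenerate is precisely the non-formal content of the proposition. Finite-dimensionality of the cohomology (from ellipticity) does not by itself rescue this: one must still produce, for each nonzero class in $H^0_{\newD}$, a class in $H^1_{\newD}$ pairing non-trivially with it, and conversely.

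This is exactly where the paper's proof does its work: it introduces an auxiliary \emph{positive-definite} metric $\tilde g$ on $\cE$ (flipping the sign of the trace on the $\End(TY)$ factor of $g_\cQ$), uses it to define harmonic representatives, and then checks that the explicit map $\Theta(\cZ)=*\left(\omega^{-1}\circ\tilde g\cdot\cZ\right)$ is an isomorphism $\cE^0\to\cE^1$ carrying $\newD$-harmonic forms to $\newD$-harmonic forms, the key computation being that $\omega^{-1}\circ\tilde g\circ\omega^{-1}\circ\tilde g$ is the identity for this choice of $\tilde g$. Your step (i) does invoke the Hodge star, but only to argue that the fibrewise pairing between the projected $G_2$ summands is perfect --- that part is comparatively easy; what is missing is the passage from cochains to cohomology via harmonic forms. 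To repair your argument you would need either to reproduce this Hodge-theoretic step, or to replace it with a closed-range/elliptic-regularity argument identifying the cohomology of the distributional dual complex with that of the smooth one --- which is more machinery, not less.
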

\begin{proof}
That this is true follows from the existence of the non-degenerate pairing $\omega$, and Hodge theory. To invoke Hodge theory, we need to introduce a positive definite pairing $\tilde g$ on the complex. Using the same notation as in \eqref{eq:shiftsymppair} for the elements in $\cal{E}$, we pick the following pairing
\begin{equation}
\label{eq:PosDefPairing}
    {\tilde g}({\cal Z}_1,{\cal Z}_2)=\int \tilde g_{\cal Q}(X_1\wedge*X_2)+\xi_1\wedge*\xi_2+\eta_1\wedge*\eta_2=\int \tilde g_{AB}{\cal Z}^A_1\wedge*{\cal Z}^B_2\:,
\end{equation}
where $\tilde g_{\cal Q}$ is the pairing on $\cal Q$ defined in \eqref{eq:innerQ}, but with the sign of the trace flipped on the $\End(TY)$ part, so that it is positive definite. The reason for this choice will become clear below.

Let us first compute how the adjoint of ${\newD}$ acts with respect to this inner product. This is a bit subtle, as we need to invoke $\omega$ to perform the integration by parts. Let $\omega_{AB}$ denote the point-wise components of the  pairing $\omega$, with inverse $\omega^{AB}$. We compute
\begin{align}
    \tilde{g}(\newD{\cal Z}_1,{\cal Z}_2)&=\int\newD{\cal Z}^A_1\wedge*{\cal Z}_2^B\tilde g_{AB}\notag\\
    &=\int\newD{\cal Z}^A_1\wedge*{\cal Z}_2^B\tilde g_{BC}\omega^{CD}\omega_{DA}\notag\\
    &=(-1)^{\vert{\cal Z}_1\vert}\int{\cal Z}^A_1\wedge\newD\left(*{\cal Z}_2^B\tilde g_{BC}\omega^{CD}\right)\omega_{DA}\:.
\end{align}
It is then clear that for ${\cal Z}$ in the kernel of the adjoint requires
\begin{equation}
    \newD*\left(\omega^{-1}\circ\tilde g\cdot{\cal Z}\right)=0\:.    
\end{equation}

Let us define the map 
\begin{equation}
    \Theta\::\;\;\;{\cal E }^0\rightarrow{\cal E }^1\:,\;\;\;\;\Theta({\cal Z})=*\left(\omega^{-1}\circ\tilde g\cdot{\cal Z}\right)\:,
\end{equation}
which is an isomorphism. If we now take ${\cal Z}$ to be harmonic, then  $\Theta({\cal Z})$ is $\newD$-closed. To check coclosure, we must compute the adjoint action on $\Theta({\cal Z})$. We get
\begin{equation}
    \newD*\left(\omega^{-1}\circ\tilde g\cdot\Theta({\cal Z})\right)=\newD\left(\omega^{-1}\circ\tilde g\circ\omega^{-1}\circ\tilde g\cdot{\cal Z})\right)\:.
\end{equation}
For our particular choice of $\tilde g$, it is straight forward to check that $\omega^{-1}\circ\tilde g\circ\omega^{-1}\circ\tilde g$ is the identity. Hence
\begin{equation}
    \newD*\left(\omega^{-1}\circ\tilde g\cdot\Theta({\cal Z})\right)=\newD{\cal Z}=0\:, 
\end{equation}
so $\Theta({\cal Z})$ is both closed and co-closed.

It follows that the map $\Theta$ maps harmonic forms to harmonic forms. This establishes the above isomorphism between  $H^0_{\newD}(Y, {\cal E })$ and $H^1_{\newD}(Y, {\cal E })$. 
\end{proof}

As a corollary, by a similar argument, one can also establish the more general Serre duality-like result
\begin{equation}
h^{p}_{\newD}(Y, {\cal E })= h^{1-p}_{\newD}(Y, {\cal E })\:.    
\end{equation}
The corresponding Euler characteristic, or ``index''
\begin{equation}
    \chi(\newD,{\cal E})=\sum_{p=-2}^3h^{p}_{\newD}(Y, {\cal E })\:,
\end{equation}
therefore vanishes. This shows that the moduli problem of the heterotic $G_2$ system is an ``index zero moduli problem''. This is encouraging, as such moduli problems have a tendency to lead to mathematically well-behaved topological invariant theories and enumerative geometry, such as, for example, in the corresponding moduli problems of ordinary Chern-Simons theory \cite{Witten:1988hf}, holomorphic Chern-Simons theory \cite{donaldson1998gauge, thomas1997gauge}, and more speculatively $G_2$ instantons \cite{donaldson2009gauge}.

\subsection{Field redefinitions and diagonalization of $\newD$}
\label{sec:diagonal}

While the ${\cE^*}$ complex, and the associated bicomplex, provide a good description of the deformation theory, the fact that $\newD$ has off-diagonal components will complicate our analysis in section \ref{sec:1loopall}. Thus, it is convenient to look for a field-redefinition which decouples the rows of the ${\cE^*}$ complex. In this section we show how to accomplish this. To do so, we first enlarge the ${\cE^*}$ complex to a complex which does not include projections in the top and bottom rows. We call this complex ${\cal F}^*$. It takes the form
\begin{equation}\label{diag:defn2}
	\begin{tikzcd}
		\dots\arrow{r}{\dd} & \Omega^3\arrow{r}\arrow{r}{\dd} &\Omega^4\arrow{r}{\dd}&\Omega^5\arrow{r}{\dd}&\Omega^6\arrow{r}{\dd}&\Omega^7\\
		& \Omega^0(\cQ)\arrow{r}{\cD}\arrow{ru}{\Tmap^{(-1)}}&\Omega^1(\cQ)\arrow{r}{\Psi\w\cD}\arrow{ru}{\Tmap^{(0)}}&\Omega^6(\cQ)\arrow{r}{\cD}\arrow{ru}{\Tmap^{(1)}}&\Omega^7(\cQ)\\
		\Omega^0\arrow{r}{-\dd}&\Omega^1\arrow{r}{-\dd}\arrow{ru}{\Smap^{(-1)}}&\Omega^2\arrow{r}{-\dd}\arrow{ru}{\Smap^{(0)}}&\Omega^3\arrow{r}{-\dd}\arrow{ru}{\Smap^{(1)}} &\Omega^4\arrow{r}{-\dd} & \dots
	\end{tikzcd}
\end{equation}
For this to still give a bicomplex, we need to be careful with how we define the $\Smap$  and $\Tmap$ maps. The $\Tmap$ maps are now defined with additional projectors onto the correct irreducible representations in the top complex, while the $\Smap$ maps only act on the appropriate irreducible representations on the bottom complex. Specifically,
\begin{align}
    {\Tmap}_{\rm new}&=\pi\circ{\Tmap}_{\rm old}\\
    {\Smap}_{\rm new}&={\Smap}_{\rm old}\circ\pi\:,
\end{align}
where $\pi$ denotes the projection onto the appropriate representation in the old complex $\cE^*$. 

Using the new complex ${\cal F}^*$, we would now like to find a field redefinition which diagonalizes the complex. We will focus on the degree zero fields, as these are the classical fields in the action \eqref{eq:action}. The operator of the complex ${\cal F}^*$ now reads
\begin{equation}
	\newD=\left(\begin{array}{ccc}
		\dd&\Tmap & 0\\
		0 &\check{\cal D}&\Smap\\
		0 & 0 &-\dd
	\end{array}\right)=
        \left(\begin{array}{ccc}
		\mathbf{d}&\Tmap & 0\\
		0 & \mathbf{d} &\Smap\\
		0 & 0 & \mathbf{d}
	\end{array}\right)\:,
\end{equation}
where we have denoted the differential on each row as $\mathbf{d}$ for a smoother computation (we can reinsert the proper differentials at any stage in the computation below). The pairing on ${\cal E}$ extends naturally to ${\cal F}$, so we can write the action as
\begin{equation}
     S =  \langle {\cal Z}, \newD\,{\cal  Z} \rangle_{\cal F} \; ,
\end{equation} 
for $\cZ$ in the image of the natural inclusion from $\cE$.

We would like to find maps $G$ and $G^{-1}$ so that the operator diagonalizes as 
\begin{equation}
    G^{-1}\circ\newD\circ G=
    \left(\begin{array}{ccc}
		\mathbf{d} & 0 & 0\\
		0 & \mathbf{d} & 0 \\
		0 & 0 & \mathbf{d}
	\end{array}\right)\:.
\end{equation}
The map $G$ then corresponds to a field redefinition of ${\cal Z}$, which will follow from what we show below. 
Let us postulate the following form of the map $G$ 
\begin{equation}
    G=
    \left(\begin{array}{ccc}
		1 & B & C\\
		0 & 1 & A \\
		0 & 0 & 1
	\end{array}\right)\:.
\end{equation}
Then the inverse can easily be checked to be
\begin{equation}
    G^{-1}=
    \left(\begin{array}{ccc}
		1 & - B & BA-C\\
		0 & 1 & - A \\
		0 & 0 & 1
	\end{array}\right)\:.
\end{equation}
Let us then compute
\begin{equation}
    G^{-1}\circ\newD\circ G=
    \left(\begin{array}{ccc}
		\mathbf{d} \;&\; \Tmap + [\mathbf{d}, B] \;\;&\;\; [\mathbf{d}, C] + \Tmap\circ A + B\circ\left([A,\mathbf{d}]-\Smap \right)\\
		0  & \mathbf{d} & \Smap + [\mathbf{d},A] \\
		0  & 0 & \mathbf{d}
	\end{array}\right)\:.
\end{equation}
We will show below that the map $A$ can be chosen so that 
\begin{equation}
\label{eq:TrivS}
    [\mathbf{d},A] + \Smap =0\:.
\end{equation}
Furthermore, we can also choose the map $B$ so that
\begin{equation}
\label{eq:TrivT}
    [\mathbf{d}, B]+ \Tmap=0\:.
\end{equation}
With these choices for $A$ and $B$, the expression then simplifies to
\begin{equation}
\label{eq:AlmostTrivD}
    G^{-1}\circ\newD\circ G={\small
    \left(\begin{array}{ccc}
		\mathbf{d} \;&\; 0 \;\;&\;\; [\mathbf{d}, C] - [\mathbf{d}, B]\circ A \\
		0  & \mathbf{d} & 0 \\
		0  & 0 & \mathbf{d}
	\end{array}\right)=
    \left(\begin{array}{ccc}
		\mathbf{d} \;&\; 0 \;\;&\;\; [\mathbf{d}, C-BA] - B\circ \Smap \\
		0  & \mathbf{d} & 0 \\
		0  & 0 & \mathbf{d}
	\end{array}\right)}\:,
\end{equation}
where we have used \eqref{eq:TrivS}.

Let us now see that we can trivialise the $\Tmap^{(0)}$ map as in \eqref{eq:TrivT}. Recall that for $Z=(M,\alpha)\in\Omega^1({\cal Q})$ we have
\begin{align}
    \Tmap^{(0)}Z&=-\pi_{\bf 7}\circ\dd\left(\pi_{\7+\2\7}M^a\wedge\Psi_a\right)\notag\\
    &=-\dd\left(\pi_{\7+\2\7}(M^a\wedge\Psi_a)\right)+\pi_{\bf 14}\dd\left(\pi_{\7+\2\7}M^a\wedge\Psi_a\right)\notag\\
    &=-\dd\left(\pi_{\7+\2\7}(M^a\wedge\Psi_a)\right)+\pi_{\bf 14}\dd\left(M^a\wedge\Psi_a\right)\notag\\
    &=-\dd\left(\pi_{\7+\2\7}(M^a\wedge\Psi_a)\right)+\pi_{\bf 14}\left((\dd_{\zeta} M^a)\wedge\Psi_a\right)\notag\\
    &=-\dd\left(\pi_{\7+\2\7}(M^a\wedge\Psi_a)\right)+\pi_{\bf 14}\left((\dd_{\zeta} M^a\wedge\Psi)_a\right)\notag\\
    &=-\dd\left(\pi_{\7+\2\7}(M^a\wedge\Psi_a)\right)+\pi_{\bf 14}\left((({\cal D}Z)^a\wedge\Psi)_a\right)\notag\\
    &=-\dd\left(\pi_{\7+\2\7}(Z^a\wedge\Psi_a)\right)+\pi_{\bf 14}\left(((\check{\cal D}Z)^a)_a\right)\:,
\end{align}
where in the third equality we have used that the singlet drops out from the second term, the fourth equality follows the computation in \eqref{eq:middle}, the $\7$-part again drops out in the fifth equality, while the gauge contribution drops out in the sixth equality. This then defines the maps $B^{(0)}$ and $B^{(1)}$ so that we may write
\begin{equation}
    \Tmap^{(0)}=-[\mathbf{d},B]^{(0)}=-\mathbf{d}\circ B^{(0)} + B^{(1)}\circ\mathbf{d}\:.
\end{equation}
Note that the projectors in $B^{(0)}$ and $B^{(1)}$ go outside of the original complex ${\cal E}^*$. This is why we needed to expand our complex to ${\cal F}^*$ in order to diagonalise the differential. The $\Smap^{(0)}$ can similarly be trivialised as in \eqref{eq:TrivS}. Indeed, this follows readily as $\Smap^{(0)}$ is again the adjoint of $\Tmap^{(0)}$ with respect to the pairing on ${\cal F}^*$.

Consider next the upper right-hand corner of the last matrix in \eqref{eq:AlmostTrivD}. We have for $\beta\in\Omega^{2}$
\begin{equation}
    B^{(1)}\circ \Smap^{(0)}\beta=B^{(1)}\left(-(\hat\dd\pi_{\bf 7}(\beta))\right)=0\:.
\end{equation}
This follows as $\hat\dd$ projects onto the $\7+\2\7$ irreducible representation when acting on $2$-forms, while $B^{(1)}$ projects onto the $\1\4$ irreducible representation. It follows that
\begin{equation}
        G^{-1}\circ\newD\circ G=
    \left(\begin{array}{ccc}
		\mathbf{d} \;&\; 0 \;\;&\;\; [\mathbf{d}, C - B\circ A] \\
		0  & \mathbf{d} & 0 \\
		0  & 0 & \mathbf{d}
	\end{array}\right)\:.
\end{equation}
We may then simply pick $C=B\circ A$ to get a diagonalised differential.

We have succeeded in diagonalising the differential $\newD$, at least in degree zero, but at a cost. We needed to expand our complex to ${\cal F}^*$. Furthermore, the field redefinitions we do, given by $G$ and $G^{-1}$, add parts of irreducible representations to our fields which are not in the original complex ${\cal E}^*$. Hence, we are not quite done yet; we need to ensure that the ${\cal F}^*$ complex does not introduce unphysical degrees of freedom. We will resolve this question in section \ref{sec:1loop}, by showing that the classical theory described in terms of the diagonal operator is equivalent to the original theory, and that the result also holds for the one-loop partition function.

\section{1-loop partition function}
\label{sec:1loopall}

We have seen that the second order variations heterotic $G_2$ superpotential can be interpreted as the action \eqref{eq:action},  which we repeat here for convenience, 
\begin{equation}
    \label{eq:1loopAction}
 S=  \langle {\cal Z}, \newD\,{\cal  Z} \rangle_{\cal E} ~.
\end{equation} 
We will now use this quadratic action to compute the absolute value of the one-loop partition function of the heterotic $G_2$ system. For a quadratic action, this can be done using formal path integral manipulations, resulting in a rigorous end result. Our analysis will be modelled on the corresponding computation for three-dimensional Chern-Simons theory \cite{Witten:1988hf}, which we repeat for the reader's convenience. The reader is referred to \cite{Pestun:2005rp} for a more extended review.

\subsection{Warm up: Abelian Chern-Simons theory}
To set the stage, we start by recalling Abelian Chern-Simons theory on a three-manifold ${\cal M}_3$ \cite{Witten:1988hf}
\begin{equation}
    S[\alpha]=\int_{{\cal M}_3}\alpha\wedge\mathrm{d}\alpha\:.
\end{equation}
The absolute value of the partition function is formally defined as the Euclidean path integral
\begin{equation}
    \vert Z\vert=\frac{1}{{\rm Vol}(G)}\int{\cal D}\alpha\: e^{-S[\alpha]}\:,
\end{equation}
where ${\rm Vol}(G)$ denotes the infinite volume of gauge transformations to be determined. 

Picking a metric, we can filter the path-integral under the Hodge-decomposition as 
\begin{align}
    \vert Z\vert&=\frac{1}{{\rm Vol}(G)}\int_{\mathrm{d}\Omega^0}{\cal D}\alpha\int_{\mathrm{d}^\dagger\Omega^2}{\cal D}\alpha\: e^{-S[\alpha]}\notag
    =\frac{{\rm Vol}(\mathrm{d}\Omega^0)}{{\rm Vol}(G)}\int_{\mathrm{d}^\dagger\Omega^2}{\cal D}\alpha\: e^{-S[\alpha]}\notag\\[5pt]
    &=\frac{{\rm Vol}(\mathrm{d}\Omega^0)}{{\rm Vol}(G)}\frac{1}{\sqrt{{\det\left(\mathrm{d}\vert_{\mathrm{d}^\dagger\Omega^2}\right)}}}\:,
\end{align}
where we use that $S[\alpha]$ is independent of exact modes, and we ignore potential finite-dimensional contributions from harmonic on-shell modes. The determinant is defined as
\begin{equation}
\det\left(\mathrm{d}\vert_{\mathrm{d}^\dagger\Omega^2}\right):=\det\left(\mathrm{d}^\dagger\mathrm{d}\vert_{\mathrm{d}^\dagger\Omega^2}\right)^{\tfrac12}\:.
\end{equation}
Note then that, due to standard Hodge theory, the determinant of the Laplacian on one-forms, $\Delta^1$, decomposes as
\begin{equation}
\det\left(\Delta^1\right)=\det\left(\mathrm{d}\mathrm{d}^\dagger\vert_{\mathrm{d}\Omega^0}\right)\det\left(\mathrm{d}^\dagger\mathrm{d}\vert_{\mathrm{d}^\dagger\Omega^2}\right)\:.
\end{equation}
The partition function can therefore be written as
\begin{equation}
    \vert Z\vert=\frac{{\rm Vol}(\mathrm{d}\Omega^0)}{{\rm Vol}(G)}\frac{\det\left(\mathrm{d}\mathrm{d}^\dagger\vert_{\mathrm{d}\Omega^0}\right)^{\tfrac{1}{4}}}{{\det\left(\Delta^1\right)}^{\tfrac14}}\:.
\end{equation}
Let us then consider $\det\left(\mathrm{d}\mathrm{d}^\dagger\vert_{\mathrm{d}\Omega^0}\right)$. If $\alpha=\mathrm{d}\gamma$ is an eigenvector of $\mathrm{d}\mathrm{d}^\dagger\vert_{\mathrm{d}\Omega^0}$, then
\begin{equation}    \mathrm{d}\mathrm{d}^\dagger\mathrm{d}\gamma=\lambda\mathrm{d}\gamma\:.
\end{equation}
We can assume that $\gamma\in\mathrm{Im}(\mathrm{d}^\dagger)$, and $\mathrm{d}$ is invertible on this image, so this equation is equivalent to
\begin{equation}
\mathrm{d}^\dagger\mathrm{d}\gamma=\Delta^0\gamma=\lambda\gamma\:.
\end{equation}
It follows that $\det\left(\mathrm{d}\mathrm{d}^\dagger\vert_{\mathrm{d}\Omega^0}\right)=\det(\Delta^0)$.

Next let's consider ${\rm Vol}(\mathrm{d}\Omega^0)$. Elementary linear algebra states that
\begin{equation}
    {\rm Vol}(\mathrm{d}\Omega^0)=\frac{\det\left(\mathrm{d}\vert_{\mathrm{d}^\dagger\Omega^1}\right)}{\ker(\dd:\Omega^0\rightarrow\Omega^1)}{\rm Vol}(\mathrm{d}\Omega^0)=\det\left(\Delta^0\right)^{\tfrac12}{\rm Vol}(\Omega^0)\:,
\end{equation}
where we have ignored the finite-dimensional kernel of constant functions. Collecting everything, one finds
\begin{equation}
    \vert Z\vert=\frac{{\rm Vol}(\Omega^0)}{{\rm Vol}(G)}\frac{\det\left(\Delta^0\right)^{\tfrac34}}{\det\left(\Delta^1\right)^{\tfrac14}}\:.
\end{equation}
By choosing ${\rm Vol}(G)={\rm Vol}(\Omega^0)$, as the volume of gauge transformations, we get
\begin{equation} \label{eq:CS1loop}
    \vert Z\vert=\frac{\det\left(\Delta^0\right)^{\tfrac34}}{\det\left(\Delta^1\right)^{\tfrac14}}\:.
\end{equation}
Upon regularisation of the determinants of the Laplacians, this is precisely the Ray-Singer torsion of the manifold ${\cal M}_3$.

\subsection{1-loop partition function of the heterotic $G_2$ system}
\label{sec:1loop}
The computation of the absolute value of the partition function of the theory \eqref{eq:1loopAction} proceeds in exactly the same fashion as for three-dimensional Chern-Simons theory. The only difference being that the theory \eqref{eq:1loopAction} is one-reducible, i.e. it has gauge of gauge symmetries, in addition to gauge symmetries. Keeping this in mind, and the fact that our fields take values in level zero $\cE^0$ of the complex \eqref{eq:bicomplex}, the formal computation of the (absolute value of the) partition function gives
\begin{equation}
    \vert Z\vert =\frac{\det\left(\Delta_{\newD}^{(-1)}\right)^{\tfrac34}}{\det\left(\Delta_{\newD}^{(-2)}\right)^{\tfrac54}\det\left(\Delta_{\newD}^{(0)}\right)^{\tfrac14}}\:.
\end{equation}
This is the same expression as the partition function computed in \cite{Kupka:2024rvl} using generalised geometry techniques, except that the differential $\newD$ is slightly different from the one appearing in \cite{Kupka:2024rvl}. We expect the results to agree upon an appropriate field redefinition of the quadratic action.

Here the Laplacians are defined as
\begin{equation}
    \Delta_{\newD}=\newD\newD^\dagger+\newD^\dagger\newD\:,
\end{equation}
where the adjoint $\newD^\dagger$ is defined using a {\it positive definite} inner product on $\cE^*$. For example, we can use the inner product given by \eqref{eq:PosDefPairing} 
\begin{equation}
    ({\cal Z}_1,{\cal Z}_2)=\int \big( \tilde g_{\cal Q}(X_1\wedge*X_2)+\xi_1\wedge*\xi_2+\eta_1\wedge*\eta_2 \big)\:,
\end{equation}
where ${\cal Z}_i=(\xi_i,X_i,\eta_i)\in\cB^{*,1}\oplus\cB^{*,0}\oplus\cB^{*,-1}$. Note that this is not the pairing $\omega$, which is not positive definite, and therefore does not give rise to an appropriate Hodge decomposition.  

We would like to be a bit more explicit, and  write this partition function in terms of determinants of more familiar operators. To do so, it is convenient to use the field-redefinition introduced in section \ref{sec:diagonal}, which decouples the rows of the ${\cE^*}$ complex.
To proceed, recall that the original action \eqref{eq:1loopAction} is invariant under shifts by $\newD$-exact forms of the fields. This is in fact the gauge symmetry of the heterotic theory. In particular, fields in the lowest row of the ${\cal E}^*$ complex are shifted by $\check{\dd}$-exact forms. Let $\beta\in\Omega^{2}_{\bf 7}$ correspond to this part of the of the field ${\cal Z}$. If we ignore a finite-dimensional part given by harmonic forms, we can Hodge decompose $\beta$ as
\begin{equation}
\label{eq:HodgeBeta1}
    \beta=\check{\dd}^{\tilde{\dagger}}(f\Phi)+\check{\dd}\gamma\:,
\end{equation}
where $\gamma_1\in\Omega^1$, and $\Phi=\tilde{*}\Psi$ is the Hodge-dual of $\Psi$ with respect to the metric given by $\Psi$. The adjoint is also defined with respect to this metric. Without loss of generality, we can pick a gauge where $\beta$ does not have any $\check{\dd}$-exact part (this does not change the action). Since the field redefinitions corresponding to $G$ and $G^{-1}$ do not touch the bottom row of ${\cal E}^*$, it leaves $\beta$ unchanged. It follows that the gauge choice for $\beta$ is preserved: the $\check{\dd}\gamma$-term cannot appear in the action after the field redefinition. 

Let $\alpha\in\Omega^{4}_{\bf 1}$ be the top-row part of ${\cal Z}$, and $\tilde\alpha\in\Omega^4$ be the part of the field after the field redefinition. This $\tilde\alpha$ need not be a singlet. The part of the action involving $\beta$ after a field redefinition is then 
\begin{equation}
    S[\beta,\tilde\alpha]=\int\beta\wedge\dd\tilde\alpha\:.
\end{equation}
Since the $\check{\dd}$-exact part of $\beta$ does not appear, this term is the same as 
\begin{align}
    S[\beta,\tilde\alpha]&=\int\check{\dd}^{\tilde{\dagger}}(f\Phi)\wedge\dd\tilde\alpha=({\dd}^{\tilde{\dagger}}(f\Phi),\tilde*\dd\tilde\alpha)=(\tilde*\dd(f\Psi),\tilde*\dd\tilde\alpha)\notag\\
    &=(\dd(f\Psi),\dd\tilde\alpha)=(f\Psi,\dd^{\tilde\dagger}\dd\tilde\alpha)=(f,\Psi\lrcorner\dd^{\tilde\dagger}\dd\tilde\alpha)\:,
\end{align}
where $(\:,\:)$ is the positive definite pairing induced by the metric $\tilde g$ of $\Psi$, and we have not been careful with signs. We can assume without loss of generality that $f$ does not have a constant part (as it would drop out of \eqref{eq:HodgeBeta1}). If we Hodge decompose $\Psi\lrcorner\dd^{\tilde\dagger}\dd\tilde\alpha$, we get
\begin{equation}
    \Psi\lrcorner\dd^{\tilde\dagger}\dd\tilde\alpha=\dd^{\tilde\dagger}\dd g+{\rm constant}\:.
\end{equation}
Note that while $g$ is a singlet, it can get contributions from the $(\7+\2\7)$-representations of the middle row of the complex, via the field redefinition. 

We thus get 
\begin{align}
    S[\beta,\tilde\alpha]&=(f,\dd^{\tilde\dagger}\dd g)=(\dd f,\dd g)=(\Psi\wedge\dd f,\Psi\wedge\dd g)\notag\\
    &=(\tilde*\dd(f\Psi),\tilde*\dd(g\Psi))=\int\beta\wedge\dd(g\Psi)\:.
\end{align}
This shows that the $\7$-valued two-form $\beta$ can only couple to a singlet, also after the field redefinition! As the singlet has no gauge transformations, except for a finite-dimensional space of translations by constants which we ignore, we will take this as our redefined singlet in the top row of the original complex ${\cal E}^*$, now using the diagonal differential on ${\cal E}^*$.
By slight abuse of notation we shall also refer to this singlet as $\alpha\in\Omega^{4}_{\bf 1}$. 

The above computations shows that the original classical action \eqref{eq:1loopAction} can be written as
\begin{equation}
\label{eq:1loopAction2}
    S=\langle Z,\check{\cal D}Z\rangle_{\cal Q}+\int\beta\wedge\check{\dd}\alpha\:,
\end{equation}
for $\alpha\in\Omega^{4}_{\bf 1}$. The gauge transformations in the current form of the action the $\beta$-fields are precisely as before, shifts by $\check{\dd}$-exact forms. The gauge symmetries of \eqref{eq:1loopAction2} are given by $\mathbf{d}$-exact forms, where $\mathbf{d}$ refers to the diagonal differential introduced in section \ref{sec:diagonal}.

We had to pick a particular $\newD$-gauge to write the action in the form of \eqref{eq:1loopAction2}. However, the field redefinitions we have done only include translations corresponding to the upper-triangular matrices $G$ and $G^{-1}$, which therefore have a trivial Jacobian in the change of the path integral measure. A general $\newD$-gauge transformation will add terms to the action \eqref{eq:1loopAction2}, but these terms can, as the above computations show, be absorbed again by field redefinitions, which again, do not change the path integral measure. The theories are therefore equivalent as quantum theories.

The action \eqref{eq:1loopAction2} is  simpler to quantise in terms of determinants of more familiar Laplacians. Furthermore, it is easier to study its BV quantisation, $\eta$-invariants, etc. We will leave this for future work, but compute the (absolute value of the) partition function here. A straight-forward computation then gives 
\begin{equation}
\label{eq:Simplified1loop}
    \vert Z\vert = \frac{\left(\det\Delta_{\check{\cal D}}^{(-1)}\right)^{\tfrac34}}{\det\left(\Delta_{\check{\cal D}}^{(0)}\right)^{\tfrac14}}\times\frac{\left(\det\Delta_{\bf 7}\right)^{\tfrac12}}{\left(\det\Delta_{\bf 1}\right)^{\tfrac32}}\:,
\end{equation}
where the superscript denotes the Laplacian at the corresponding entry in the ${\cal E}^*$ complex, and the subscripts denote the irreducible representation. 
Here we have used that the Laplacians acting on one-forms and $\7$-valued two-forms are equivalent, and similarly with functions and singlet valued four-forms \cite{Ashmore:2021pdm}.

It is now straight-forward to compare the factors in \eqref{eq:Simplified1loop} with the terms in the action \eqref{eq:1loopAction2}. The first factor in \eqref{eq:Simplified1loop} derives from the first term in \eqref{eq:1loopAction2}, and is the one-loop partition function of an instanton gauge theory \cite{deBoer:2006bp}, with instanton connection $\cal D$ on the bundle $\cal Q$. The second factor derives from the second term in \eqref{eq:1loopAction2}. It is the inverse of the square of the result of an Abelian instanton gauge theory. The square comes from the fact that we have two fields, $\beta$ and $\alpha$, while the inverse is due to the fields being shifted by one form degree. The result may also be compared with previous quantization of type II strings on $G_2$ manifolds \cite{deBoer:2005pt, deBoer:2007zu, Ashmore:2021pdm}.

\section{Conclusion and outlook}
\label{sec:concl}

In this paper, we have used the superpotential we defined in \cite{delaOssa:2019cci}, for three-dimensional, ${\cal N}=1$ compactifications of heterotic supergravity on 7-manifolds $Y$ with $G_2$ structure, to develop a deformation theory of such heterotic $G_2$ systems. We have shown that the vanishing of the second order variation of the superpotential encodes the equations for infinitesimal moduli of these string configurations, as the constraint 
\[ \newD {\cal Z} = 0 \; .\] 
Here ${\cal Z} \in \cE^0=
\Omega^{4}_{\bf 1}\oplus\Omega^1(\cQ)\oplus\Omega^{2}_{\bf 7}$ is constructed from the deformations of the supergravity fields (dilaton, metric, gauge field and $B$-field). We have shown, by explicitly constructing the relevant nilpotent differential $\newD$, and the required symplectic pairing, that this constraint can be  encoded in an elliptic double complex (or the associated total complex). This is in accordance with more formal reasoning, based on generalised geometry, for heterotic $G_2$ compactifications  \cite{Kupka:2024rvl}.  We have also noted that the nilpotency of $\newD$ can be interpreted as a $G_2$ instanton condition, as has been observed previously for complexes related to heterotic systems \cite{delaOssa:2017gjq,Silva:2024fvl}.

Our bicomplex encodes the equations for infinitesimal moduli of heterotic $G_2$ systems at zero homological degree. At lower homological degree one finds the gauge, and gauge for gauge, symmetries of the theory. It thus follows that the infinitesimal moduli of heterotic $G_2$ systems are captured by $H^0_\newD(Y,{\cal E})$ where the superscript encodes homological degree. Moreover, the obstruction space is given by $H^1_\newD(Y,{\cal E})$. We have shown that the dimensions of these two cohomology groups are the same, showing that the expected, or ``virtual'', dimension of the infinitesimal moduli space is zero.  This generalizes to a Serre-like duality and shows that the complex has vanishing Euler characteristic $\chi(\newD,{\cal E})$. This indicates that there might be natural enumerative invariants for heterotic $G_2$ systems. Computing $H^p_\newD(Y,{\cal E})$ in explicit solutions, such as \cite{delaOssa:2021qlt,Lotay:2021eog}, is an interesting avenue for future work.

As we discussed already in \cite{delaOssa:2019cci}, one motivation for our study of heterotic superpotentials is that these functionals allow us to explore finite deformations of heterotic systems. This has been established for the Hull-Strominger system in \cite{Ashmore:2018ybe}: by choosing a parametrisation for finite deformations suggested by holomorphy, it was shown that the superpotential expansion truncates at cubic order, and that the deformation algebra is an $L_3$ algebra. In \cite{delaOssa:2019cci} we conjectured that there should be a similar convenient parametrisation for the finite deformations of heterotic $G_2$ systems, which should lead to a superpotential expansion that truncates at fourth order. While we have not been able to confirm this suspicion in this paper, we have taken an important step towards its resolution by identifying a field redefinition that diagonalises the differential appearing in the second order variation of the superpotential. 

By interpreting the second order variation of the superpotential as a classical quadratic action in $\cal Z$, we have initiated a study of quantum aspects of the heterotic $G_2$ system. Using the formal analogy of the action with three-dimensional Chern-Simons theory, we could express the absolute value of the one-loop partition function in terms of determinants of Laplacians $\Delta^p_\newD$. This result has the same functional form as obtained in \cite{Kupka:2024rvl} using generalised geometry techniques. With our explicit double complex in hand, we can go further: we have shown that we may perform a field redefinition that diagonalizes $\newD$. Using this, the one-loop partition function can be rewritten in terms of more familiar Laplacians. This is along the lines of what is obtained from quantization of type II strings on $G_2$ spaces \cite{deBoer:2005pt, deBoer:2007zu, Ashmore:2021pdm}.

In this paper, we have not performed the full BV quantisation of the heterotic $G_2$ systems. This is a natural next step to take. We expect that, by using the double complex in its diagonalised format, it will be relatively straight forward to compute the phase of the one-loop partition function (i.e. the $\eta$ invariant). We may then compare with the results obtained from topological $G_2$ strings \cite{deBoer:2006bp,deBoer:2007zu}. Since, in contrast to these topological studies, our bicomplex encodes both gauge and geometric degrees of freedom, we may be able to provide new insights to possible geometric anomalies of the heterotic $G_2$ system. The intricate coupling between these degrees of freedom may also help expose to what extent topological invariants and enumerative geometry may be defined  for instantons in the $G_2$ context \cite{donaldson2009gauge, Joyce:2016fij}.

\section*{Acknowledgements}
We would like to thank Mario Garcia-Fernandez, Spiro Karigiannis, Julian Kupka, Sebastien Picard, Jock McOrist, Martin Sticka, Charles Strickland-Constable, David Tennyson, Fridrich Valach, for enlightening discussions.  ML and MM were supported in part by Vetenskapsrådet, grant no.~2020-03230 and grant no. 2023-00508, respectively. ML thanks the Erwin Schrödinger Institute in Vienna, and the organisers of the  program {\it The Landscape vs~the Swampland}, for hospitality during some parts of this work. XD and ES would like to thank the organisers of the scientific programmes \emph{The Geometry of Moduli Spaces in String Theory} and \emph{New Deformations of Quantum Field and Gravity Theories} respectively, hosted by the MATRIX Institute, for the hospitality during later stages of this work. XD thanks The Sydney Mathematical Institute for support and hospitality while this work was being completed. For the purpose of open access, the authors have applied a CC-BY public copyright license to any Author Accepted Manuscript (AAM) version arising from this submission.

\appendix
\section{Proof of proposition \ref{prop:del2W}}
\label{proof-prop:del2W}

\medskip

In this appendix we give a detailed proof of proposition \ref{prop:del2W}.  We begin by varying \eqref{eq:delwbis} which gives
\begin{align}
\begin{split}
\delta_{2}\delta_{1}{{\mathfrak{w}}} |_{0} &=  
 i_{\mathring M_{1}}(\psi)\wedge 
 \delta_{2}\left(H -\widehat T 
 \right)
\\[5pt]
&\quad  
+\frac{\alpha'}{2}\,\Big( \tr(\alpha_{1}\wedge \delta_{2}(F\wedge\psi)) - \tr(\kappa_{1}\wedge \delta_{2}(R\wedge\psi)) \Big)
\\[5pt]
&\quad - ({\cal B}_{1} - 2m_{1}) \wedge \delta_{2}\big( \dd\psi -2\,\dd\phi\wedge\psi\big)
\\[5pt]
&\quad + \delta_{2}\left(\delta_{1} h + \frac{3}{7}\,\tr M_{1}\, \Big(h + \frac{1}{3}\,\tau_{0}\Big) \right) \,   \varphi\wedge\psi  ~.
\end{split}\label{eq:deldelwbis}
\end{align}
We now analyse each term. 
\bigskip

\noindent\underline{First term of equation \eqref{eq:deldelwbis}}

\begin{align*}
 &i_{\mathring M_{1}}(\psi)\wedge 
 \delta_{2}\left(H - \widehat T 
 \right)
\\[5pt]
&\quad
=    i_{\mathring M_{1}}(\psi)\wedge
\left(\dd{\cal B}_{2}  +  \frac{\alpha'}{2}\, \Big( \tr (\alpha_{2} \wedge F) - \tr(\kappa_{2}\wedge R)\Big)
- \delta_{2}\widehat T 
\right) ~.
\end{align*}
To continue we use the following identity: 
\be
i_{M}(\psi)\wedge\lambda =   \tr M\, \lambda \wedge\psi - i_{M}(\lambda)\wedge\psi  ~,
\qquad\forall\lambda\in\Omega^{3}(Y)
\label{eq:lemma3form}
\ee
Then, noting that $\mathring M$ is traceless, we find (on shell)
\be\begin{split}
 i_{\mathring M_{1}}(\psi)\wedge 
 \delta_{2}\Big(&H - \widehat T 
 \Big)
\\[5pt]
&~ =   -  i_{\mathring M_{1}}
\Big[\dd{\cal B}_{2}  +  \frac{\alpha'}{2}\, \Big( \tr (\alpha_{2} \wedge F) - \tr(\kappa_{2}\wedge R)\Big) - \delta_{2}\widehat T
\Big] \wedge\psi~. \label{eq:termone}
\end{split}\ee

We need to express $\delta \widehat T$ in terms of $M$ by varying the structure equations \eqref{eq:dphi} and \eqref{eq:dpsi}
 being careful to only set $\tau_{2}= 0$  on shell while retainnig $\delta\tau_{2}$.
 This computation we done in section 3  of \cite{delaOssa:2017pqy} where the authors found 
\be
\delta\widehat T_{a}\lrcorner\varphi=  \big( 2\, (\dd m)_{a} + 2\, \dd_{\zeta} M_{a} \big)\lrcorner\varphi   + \delta\tau_{2\, ab}\,\dd x^{b}~,
\qquad \pi_{\bf 7}(\delta\tau_{2}) = 0~. \label{eq:pi7deltaT}
\ee
Since only $\pi_{\7}\big((\delta T)_{a}\big)$ contributes to \eqref{eq:termone}
we have
\be\begin{split}
  & i_{\mathring M_{1}}\left( \dd{\cal B}_{2} - \delta_{2} \widehat T\right)\wedge\psi
   =  i_{\mathring M_{1}}
     \Big(
    \dd(\pi_{\bf{7}}\big({\cal B}_{2})- 2 m_{2}\big) +  \dd_{\zeta} \big(2\, M_{2} + \pi_{\bf{14}}({\cal B}_{2})\big)
    \Big) \wedge\psi
    \\[8pt]
  &\qquad\qquad\qquad\qquad\qquad 
    + \mathring M_{1}^{a}\wedge *\big((\delta_{2}\tau_{2})_{ab}\dd x^{b}\big)  
   ~,
      \end{split}\label{eq:fortermone}
      \ee
 where we have used the identity
  \[
 \big( \dd_{\zeta} z_{a} + (\dd z)_{a}\big)\wedge\psi = 0~, \qquad \forall\, z\in \Omega^{2}_{\bf{14}}(Y)~.
  \]    
  As $\pi_{\bf 7}(\delta\tau_{2})= 0$, the last term in \eqref{eq:fortermone} vanishes
 \[
 \mathring M_{1}^{a}\wedge *\big((\delta_{2}\tau_{2})_{ab}\dd x^{b}\big)
 = 2\, *(m_{1} \lrcorner \delta_{2}\tau_{2})=0~.
 \]
   Then \eqref{eq:termone} becomes
 \be\begin{split}
 i_{\mathring M_{1}}(\psi)&\wedge 
 \delta_{2}\left( H - \widehat T 
 \right)
\\[8pt]
& =   2\, i_{\mathring M_{1}}
\left(
\dd_{\zeta} y_{2} 
-  \frac{\alpha'}{4}\, \Big( \tr (\alpha_{2} \wedge F) - \tr(\kappa_{2}\wedge R)\Big)
- \dd\beta_{2}
\right) \wedge\psi~,\label{eq:termonebis}
\end{split}\ee
where $\beta$  and $y$ are  given in \eqref{eq:beta} and \eqref{eq:y}.
\bigskip

\noindent\underline{Second term of equation \eqref{eq:deldelwbis}:}
Recall that \cite{atiyah57} 
\[
\delta F = \dd_{A}\alpha - i_{M}(F)~,
\]
with a similar equation for $\delta R$.  Then, on shell we obtain
\bigskip
  \begin{align*}
\delta_{2}&\left(\frac{\alpha'}{2}\,\Big( \tr(\alpha_{1}\wedge F) - \tr(\kappa_{1}\wedge R) \Big)\wedge \psi \right)
\\[5pt]
&\qquad\quad=  
\frac{\alpha'}{2}\,\Big( \tr\big(\alpha_{1}\wedge (\dd_{A}\alpha_{2} - i_{ M_{2}}(F))\big) 
- \tr\big(\kappa_{1}\wedge (\dd_{\Theta}\kappa_{2}-  i_{ M_{2}}(R) )\big)
 \Big)\wedge \psi
\end{align*}

\bigskip
\noindent\underline{Third term of equation \eqref{eq:deldelwbis}}
Recall \cite{delaOssa:2017pqy}
\[
\delta\dd\psi = i_{\dd_{\zeta}(M)}(\psi) + i_{M}(\dd \psi) = i_{\dd_{\zeta}(M)}(\psi) + i_{M}(4\tau_{1})\wedge\psi + 4\tau_{1}\wedge i_{M}(\psi)~.
\]
 Then
\begin{align*}
&({\cal B}_{1} - 2 m_{1})\wedge \delta_{2}(\dd\psi - 2\, \dd\phi\wedge\psi)
\\[5pt]
&~~ =  ({\cal B}_{1} - 2 m_{1})\wedge 
\big( i_{\dd_{\zeta}M_{2}}(\psi) + \big( i_{M_{2}} (4\, \tau_{1}) -  2\,\dd\delta_{2}\phi\big)\wedge\psi 
+ ( 4\, \tau_{1}-  2\,\dd\phi)\wedge i_{M_{2}}(\psi) \big)~.
\end{align*}
The last term vanishes on shell.  Then, separating $\cal B$ into its $\bf 7$ and $\bf 14$ parts we have
\be
\begin{split}
({\cal B}_{1} - 2 m_{1})&\wedge \delta_{2}(\dd\psi - 2\, \dd\phi\wedge\psi)
=   \pi_{\bf 14}({\cal B}_{1})\wedge i_{\dd_{\zeta}M_{2}}(\psi)
\\[5pt]
&\qquad 
+ 2\, \beta_{1}\wedge\left(
\frac{1}{3}\, \psi\lrcorner\,\left(i_{\dd_{\zeta}M_{2}}(\psi)\right) +  i_{M_{2}} (4\, \tau_{1}) -  2\,\delta_{2}(\dd\phi) \right)\wedge\psi
~.
\end{split}\label{eq:pretermtwo}
\ee

\noindent The first term of \eqref{eq:pretermtwo} can written as 
\begin{align*}
 \pi_{\bf 14}({\cal B}_{1}) \wedge i_{\dd_{\zeta} M_{2}}(\psi)&= 
 -  \pi_{\bf 14}({\cal B}_{1})^{a} \wedge i_{\dd_{\zeta} M_{2\, a}}\wedge\psi
 \\[5pt]
 &=-  \left( i_{\pi_{\bf 14}({\cal B}_{1})}(\dd_{\zeta} y_{2})
 - \frac{1}{2}\, i_{\pi_{\bf 14}({\cal B}_{1})}\big(\dd_{\zeta}\pi_{\bf 14}({\cal B}_{2})\big)
 \right)\wedge \psi
 \\[5pt]
 &= -\left( i_{\pi_{\bf 14}({\cal B}_{1})}(\dd_{\zeta} y_{2})
 - \frac{1}{2}\, i_{\pi_{\bf 14}({\cal B}_{1})}\big(\dd\pi_{\bf 14}({\cal B}_{2})\big)
 \right)\wedge \psi~,
\end{align*}
where the last term vanishes. 
Then the third term of \eqref{eq:deldelwbis} is given by
\be
\begin{split}
&- ({\cal B}_{1} - 2 m_{1})\wedge \delta_{2}(\dd\psi - 2\, \dd\phi\wedge\psi)
\\[5pt]
&=  
\left( i_{\pi_{\bf 14}({\cal B}_{1})}(\dd_{\zeta} y_{2})
- 2\, \beta_{1}\wedge\left(
\frac{1}{3}\, \psi\lrcorner\,\left(i_{\dd_{\zeta}M_{2}}(\psi)\right) +  i_{M_{2}} (4\, \tau_{1}) -  2\,\delta_{2}(\dd\phi) \right)\right)\wedge\psi
~.
\end{split}\ee

\bigskip

\noindent\underline{Fourth term of equation \eqref{eq:deldelwbis}:}
On shell, this term gives
\be
\delta_{2}\left(\delta_{1} h + \frac{3}{7}\,\tr M_{1}\, \Big(h + \frac{1}{3}\,\tau_{0}\Big) \right) \,   \varphi\wedge\psi 
=  \frac{1}{7}\,\tr M_{1}\,  \delta_{2}\tau_{0} \,   \varphi\wedge\psi 
\ee
To compute $\delta\tau_{0}$ in terms of $M$, recall
\[
\tau_{0} = \frac{1}{7}\, \dd \varphi\lrcorner\psi~.
\]
Taking the Hodge-dual
\[
\tau_{0}\, \varphi\wedge\psi = \dd\varphi\wedge\varphi~,
\]
and varying this equation we find
\[
\begin{split}
(\delta \tau_{0} + \tau_{0}\,\tr M)\varphi\wedge\psi 
&= \dd i_{M}(\varphi)\wedge\varphi + \dd\varphi\wedge i_{M}(\varphi)
=i_{\dd_{\zeta} M}(\varphi)\wedge\varphi + i_{M}(\dd\varphi\wedge\varphi)
\\[5pt]
&= i_{\dd_{\zeta} M}(\varphi)\wedge\varphi + \tau_{0}\, \tr M \varphi\wedge\psi~.
\end{split}
\]
Then
\[
\delta \tau_{0}= \frac{1}{7}\, \psi\lrcorner i_{\dd_{\zeta} M}(\varphi)~.
\]
\bigskip
It is not hard to prove that
\be
\delta \tau_{0}= \frac{1}{7}\, \psi\lrcorner i_{\dd_{\zeta} M}(\varphi) = 
\frac{2}{7}\,  \big(-2\,\dd m + i_{M}(\widehat T)\big)\lrcorner\varphi
~,\label{eq:deltatau0}
\ee
which implies
\[
\delta \tau_{0}\,\varphi\wedge\psi =  2 \big(-2\, \dd m + i_{M}(\widehat T)\big)\wedge\psi~.
\]
Then, the fourth term of \eqref{eq:deldelwbis} becomes
\be
\delta_{2}\left(\delta_{1} h + \frac{3}{7}\,\tr M_{1}\, \Big(h + \frac{1}{3}\,\tau_{0}\Big) \right) \,   \varphi\wedge\psi 
=  \frac{2}{7}\,\tr M_{1}\, \big(-2\,\dd m_{2} + i_{M_{2}}(T)\big)\wedge\psi~.
\ee

Putting together the results we have so far, we find
  \be
\begin{split}
\frac{1}{2}\,\delta_{2}\delta_{1}{{\mathfrak{w}}}\,|_{0} 
& =   \Big\{ i_{\mathring M_{1}}
\Big[
~\dd_{\zeta} y_{2} 
-  \frac{\alpha'}{4}\, \Big( \tr (\alpha_{2} \wedge F) - \tr(\kappa_{2}\wedge R)\Big) 
\Big]
- i_{\mathring M_{1}}(\dd\beta_{2})
\\[5pt]
&\quad
+ \, \frac{1}{7}\,\tr M_{1}\, \left(-2\,\dd m_{2} + i_{M_{2}}(T) \right)
+\frac{1}{2}\, i_{\pi_{\bf 14}({\cal B}_{1})}(\dd_{\zeta} y_{2}) \\[5pt]
 &\quad+  
\frac{\alpha'}{4}\,\Big[ \tr\big(\alpha_{1}\wedge (\dd_{A}\alpha_{2} - i_{ M_{2}}(F))\big) 
- \tr\big(\kappa_{1}\wedge (\dd_{\Theta}\kappa_{2}-  i_{ M_{2}}(R) )\big)
 \Big]
 \\[5pt]
& \quad 
-  \beta_{1}\wedge\Big[\,
\frac{1}{3}\, \psi\lrcorner\,\left(i_{\dd_{\zeta}M_{2}}(\psi)\right) +  i_{M_{2}} (4\, \tau_{1}) -  2\, \delta_{2}(\dd\phi) \Big]
\Big\}\wedge\psi~.
 \end{split}\label{eq:sofar}
 \ee
 We need to somewhat reorganize the terms to obtain the equation \eqref{eq:deldelw} claimed in the proposition.
 Note that only the third term in \eqref{eq:sofar} contains $\tr M_{1}$.  This term precisely combines with the first term giving
 \[
 \begin{split}
 &\Big\{ i_{\mathring M_{1}}
\Big[
~\dd_{\zeta} y_{2} 
-  \frac{\alpha'}{4}\, \Big( \tr (\alpha_{2} \wedge F) - \tr(\kappa_{2}\wedge R)\Big) 
\Big]
+ \, \frac{1}{7}\,\tr M_{1}\, \left(-2\dd m_{2} + i_{M_{2}}(T)  \right)
\Big\}\wedge\psi
\\[8pt]
& =   i_{M_{1}}
\Big[
 \dd_{\zeta} y_{2} 
-  \frac{\alpha'}{4}\, \Big( \tr (\alpha_{2} \wedge F) - \tr(\kappa_{2}\wedge R)\Big) 
\Big]\wedge\psi
\end{split}
\]  
 The computation to see this is straightforward.  
  \begin{align*}
  i_{\frac{1}{7}\, \tr M_{1}{\bf 1}}\big(\dd_{\zeta} y_{2}\big)\wedge\psi&= \frac{1}{7}\, \tr M_{1}\, \dd x^{a}\wedge\dd_{\zeta} y_{2\, a}\wedge\psi
  = \frac{1}{7}\, \tr M_{1}\, \dd x^{a}\wedge\dd_{\zeta} M_{2\, a}\wedge\psi~,
  \end{align*}
  where $\pi_{\bf{14}}(\cal B)$ drops out because of the product with $\psi$:
  \[
  \dd x^{a}\wedge\dd_{\zeta}\big(\pi_{\bf{14}}({\cal B}))_{a} \big)\wedge\psi
 =  - \dd x^{a}\wedge\dd \big(\pi_{\bf{14}}({\cal B}) \big)_{a}\wedge\psi = - \dd \big(\pi_{\bf{14}}({\cal B}) \big)\wedge\psi =0~,
  \]
  as $\pi_{\bf 1}\dd(\pi_{\bf{14}}({\cal B}) = 0$ when $\tau_{2}= 0$.
  Then
  \begin{align*}
  i_{\frac{1}{7}\, \tr M_{1}{\bf 1}}\big(\dd_{\zeta} y_{2}\big)\wedge\psi
  &= \frac{2}{7}\, \tr M_{1}\, \left(\partial_{b} m_{2\, ac} + 
  \frac{1}{2}\, \hat T^{e}{}_{ab}\, M_{2\, ec}\right)\, \dd x^{abc}\wedge\psi
  \\[8pt]
 & = \frac{1}{7}\, \tr M_{1}\, \left(- 2\, \dd m_{2} +  i_{M_{2}}(\hat T)\right)\, \wedge\psi
  ~.
  \end{align*} 
Note also that the third term of equation \eqref{eq:sofar} can also be combined with the first. 
Then, we have
 \be
\begin{split}
 \frac{1}{2}\,\delta_{2}\delta_{1}{{\mathfrak{w}}}\,|_{0} 
& =  \Big\{i_{y_{1}}
\Big[
\dd_{\zeta} y_{2} 
-   \frac{\alpha'}{4}\, \Big( \tr (\alpha_{2} \wedge F) - \tr(\kappa_{2}\wedge R)\Big) 
\Big]
- i_{\mathring M_{1}}(\dd\beta_{2})
\\[5pt]
 &\quad+  
\frac{\alpha'}{4}\,\Big[ \tr\big(\alpha_{1}\wedge (\dd_{A}\alpha_{2} - i_{ M_{2}}(F))\big) 
- \tr\big(\kappa_{1}\wedge (\dd_{\Theta}\kappa_{2}-  i_{ M_{2}}(R) )\big)
 \Big]
 \\[5pt]
&\quad 
-  \beta_{1}\wedge\Big[\,
\frac{1}{3}\, \psi\lrcorner\,\left(i_{\dd_{\zeta}M_{2}}(\psi)\right) +  i_{M_{2}} (4\, \tau_{1}) -  2\,\delta_{2}(\dd\phi) \Big]
\Big\}\wedge\psi
~.
 \end{split}
 \ee
 Finally, we observe that the last term can be written as 
 \begin{equation}
\begin{split}
 \beta_{1}\wedge&\left[\,
\frac{1}{3}\, \psi\lrcorner\,\left(i_{\dd_{\zeta}M_{2}}(\psi)\right) +  i_{M_{2}} (4\, \tau_{1}) -  2\,\dd\delta_{2}\phi \right]
 \wedge\psi 
 \\[8pt]
 &\qquad\qquad\qquad
 =  \beta_{1}\wedge\left[ \, 
\frac{1}{3}\, \psi\lrcorner\,\left(i_{\dd_{\zeta}{\mathring M}_{2}}(\psi)\right) +  i_{{\mathring M}_{2}} (4\, \tau_{1}) 
-  2\,\dd\hat\delta_{2}\phi \right]
 \wedge\psi 
 ~,
\end{split}
  \end{equation}
  where we have defined
  \[
  \hat\delta\phi =  \delta\phi - \frac{2}{7}\, \tr M~.
  \]
  In passing from the first line to the second in this equation we have used
  \[
\begin{split}
i_{\dd_{\zeta}(\tr M\, {\bf 1})}(\psi) &= \big(\dd(\tr M)\wedge\dd x^{a} -  \tr M\,T^{a}\big)\wedge\psi_{a}  
= 4\, \dd(\tr M) \wedge\psi - i_{T}(\psi) 
\\[5pt]
&= 4\,\big(\dd(\tr M) - \tr M\, \tau_{1}\big)\wedge\psi
~.
\end{split}
\]
This proves the proposition. 
 \bigskip
 $\null \hfill\square$ 
\bigskip

\section{Ellipticity}
\label{sec:elliptic}

In this appendix, we prove that the $(\cal{E},\newD)$ complex is elliptic. Good introductions to the notions of  ellipticity of operators and complexes,
can be found in the book of Gilkey \cite{Gilkey}, and the lecture notes of Pati \cite{Pati}. A very brief summary of most of the notions we need can also be found in appendix B of \cite{delaOssa:2016ivz}.

Recall first that a complex $(E,D)$ is elliptic if the induced complex $(E,\sigma_L(D))$ of leading (or principal) symbols $\sigma_L(D)$ is exact. Reyes-Carrion \cite{ReyesCarrion:1998si,ReyesCarrionPhd} proved this is the case for the canonical $G_2$ complex with differential $\check{\dd}$, and in \cite{delaOssa:2016ivz,delaOssa:2017pqy} this was used to prove ellipticity of the $\check{\cal{D}}$-complex. 

Second, we will make use of
\begin{lemma} \label{lemma-ellipt}
	Let $E:=\bigoplus L_i,\,F:=\bigoplus M_j$ two vector bundles, given by a direct sum with the same index set.
	Let $D:E\rightarrow F$ a differential operator such that, with respect to the above decomposition: $D$ is upper triangular; all entries, $D_{ji}:L_i\rightarrow M_j$, have the same degree; and, the diagonal entries $D_{ii}$ are elliptic.
	Then, the total operator, $D$, is elliptic.
\end{lemma}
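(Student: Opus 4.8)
The plan is to prove Lemma \ref{lemma-ellipt} by reducing the exactness of the symbol complex (or, more precisely here, the injectivity/surjectivity of the symbol of a single operator, since the lemma is phrased for a map $D\colon E\to F$ rather than a complex) to a statement about block upper-triangular matrices of linear maps. Recall that the principal symbol $\sigma_L(D)(x,\xi)\colon E_x\to F_x$ of a differential operator of order $k$ is obtained by replacing each $\partial/\partial x^i$ by $i\xi_i$ and keeping only the top-order part; for $D$ to be elliptic (as an operator, not a complex) one needs $\sigma_L(D)(x,\xi)$ to be an isomorphism for all $x$ and all nonzero $\xi\in T^*_xM$. (In the complex setting the analogous statement is exactness of the symbol sequence, and the argument below adapts verbatim, working square by square along the complex.)

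First I would set up notation: since $D$ is upper triangular with respect to $E=\bigoplus_i L_i$ and $F=\bigoplus_j M_j$ over a common ordered index set, write $\sigma_L(D)(x,\xi)$ as a block upper-triangular matrix $\big(s_{ji}(x,\xi)\big)$ with $s_{ji}=\sigma_L(D_{ji})(x,\xi)=0$ for $j>i$, and diagonal blocks $s_{ii}=\sigma_L(D_{ii})(x,\xi)$. Here I use the hypothesis that all entries $D_{ji}$ have the same order $k$, so that the symbols $s_{ji}$ are all homogeneous of the same degree $k$ in $\xi$ and assemble into a genuine matrix of linear maps (if the orders differed, the off-diagonal symbols would either vanish or dominate and the block structure would degenerate). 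The hypothesis that each $D_{ii}$ is elliptic means precisely that $s_{ii}(x,\xi)\colon L_{i,x}\to M_{i,x}$ is an isomorphism for every $x$ and every $\xi\neq 0$.

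The key step is then the elementary linear-algebra fact that a block upper-triangular square matrix whose diagonal blocks are all invertible is itself invertible, with inverse again block upper-triangular. I would prove this by induction on the number of blocks: partition $s(x,\xi)$ as a $2\times 2$ block matrix $\begin{pmatrix} s_{11} & * \\ 0 & s' \end{pmatrix}$ where $s'$ is the lower-right block (itself upper triangular with invertible diagonal, hence invertible by induction), and $s_{11}$ is invertible by the ellipticity of $D_{11}$; then invertibility of $s(x,\xi)$ follows from the explicit formula for the inverse of a $2\times2$ block-triangular matrix. Applying this with $\xi\neq 0$ arbitrary shows $\sigma_L(D)(x,\xi)$ is an isomorphism, i.e.\ $D$ is elliptic. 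In the complex version one runs the same triangular argument on the symbol sequence $\sigma_L(D)\colon \mathcal E^\bullet\to\mathcal E^{\bullet+1}$: exactness at each spot reduces, via the triangular filtration, to exactness of the diagonal sub-complexes, which are elliptic by assumption, plus the vanishing of the relevant off-diagonal symbol maps by degree/representation-theoretic reasons.

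I do not expect a serious obstacle here — the lemma is essentially a formal consequence of the block-triangular structure. The one point that deserves care, and which I would state explicitly, is the role of the equal-order hypothesis: it is what guarantees the principal symbol respects the block decomposition so that ``block upper triangular'' is a meaningful statement at the level of symbols (for the application to $\newD$, the diagonal operators $\dd$, $\check{\cal D}$, $\dd_A$, $\dd_\Theta$ are all first order, and the off-diagonal maps $\Tmap$, $\Smap$, $\cal F$, $\cal R$ are algebraic (order zero) — so strictly the relevant statement is that adding order-zero terms to an elliptic first-order operator does not change the principal symbol, a special and even easier case of the lemma). I would mention this so the reader sees why the lemma applies to the complex \eqref{diag:defn1}.
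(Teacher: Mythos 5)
Your proof is correct and follows essentially the same route as the paper: since all entries have the same order, the principal symbol of $D$ is the block upper-triangular matrix of the entries' symbols, and invertibility (resp.\ exactness, via the triangular filtration / long-exact-sequence argument in the complex version) follows from linear algebra applied to the diagonal blocks. One correction to your closing parenthetical about the application: the off-diagonal maps $\Smap$ and $\Tmap$ of $\newD$ are \emph{not} algebraic --- they involve $\hat{\dd}$ and $\dd$ and are genuine first-order operators, matching the first-order diagonal entries $\check{\dd}$ and $\check{\cD}$, so the equal-order hypothesis is used exactly as stated; it is the maps $\cF$, $\cR$ sitting \emph{inside} $\check{\cD}$ that are order zero, which is relevant to the ellipticity of the middle row (quoted from earlier work), not to the triangular structure of $\newD$.
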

\begin{proof}
	Given our assumptions, the symbol of $D$ is a matrix $\sigma_\xi(D)$ with entries\break \hbox{$\sigma_\xi(D)_{ij}=\sigma_\xi(D_{ij})$}, so the result follows by linear algebra.
    
\end{proof}

Clearly, $\newD$ is upper triangular, and  by the results quoted before the Lemma, the diagonal entries of $\newD$ are elliptic. Moreover, the off-diagonal elements are given by the $\cal{S}$ and $\cal{T}$ maps, which are all degree-1 operators. Thus, the conditions for Lemma \ref{lemma-ellipt} are fulfilled, and we conclude the $(\cal{E},\newD)$ complex is elliptic.

Lemma \ref{lemma-ellipt} on the level of the Laplacians 
gives the same result.

\bibliographystyle{JHEP.bst}
\bibliography{bibilio}
\end{document}